\DeclarePairedDelimiter{\ceil}{\lceil}{\rceil}
\newtheorem{theorem}{Theorem}
\newtheorem{lemma}{Lemma}
\newtheorem{obs}{Observation}
\newtheorem{cor}{Corollary}
\begin{document}





\title{A Variant of the Maximum Weight Independent Set Problem}
\author{Sayan Bandyapadhyay}
\affil{University of Iowa\\ Iowa City, USA}

\maketitle

\begin{abstract}
We study a natural extension of the Maximum Weight Independent Set Problem (MWIS), one of the most studied optimization problems in Graph algorithms. We are given a graph $G=(V,E)$, a weight function $w: V \rightarrow \mathbb{R^+}$, a budget function $b: V \rightarrow \mathbb{Z^+}$, and a positive integer $B$. The weight (resp. budget) of a subset of vertices is the sum of weights (resp. budgets) of the vertices in the subset. A $k$-budgeted independent set in $G$ is a subset of vertices, such that no pair of vertices in that subset are adjacent, and the budget of the subset is at most $k$. The goal is to find a $B$-budgeted independent set in $G$ such that its weight is maximum among all the $B$-budgeted independent sets in $G$. We refer to this problem as MWBIS. Being a generalization of MWIS, MWBIS also has several applications in Scheduling, Wireless networks and so on. Due to the hardness results implied from MWIS, we study the MWBIS problem in several special classes of graphs. We design exact algorithms for trees, forests, cycle graphs, and interval graphs. In unweighted case we design an approximation algorithm for $d+1$-claw free graphs whose approximation ratio ($d$) is competitive with the approximation ratio ($\frac{d}{2}$) of MWIS (unweighted). Furthermore, we extend Baker's technique \cite{Baker83} to get a PTAS for MWBIS in planar graphs. 
\end{abstract}

\section{Introduction}\label{sec:intro}
We are given an undirected graph $G=(V,E)$ and a \textit{weight} function $w: V \rightarrow \mathbb{R^+}$. An \textit{independent set} in $G$ is a subset of vertices such that no pair of vertices in that subset are connected with an edge. The \textit{weight} of a subset of vertices is the sum of the weights of the vertices in the subset. A \textit{Maximum Weight Independent Set} in $G$ is an independent set such that its weight is maximum among all the independent sets of $G$. We are given another function $b: V \rightarrow \mathbb{Z^+}$ which we refer to as \textit{budget} (or cost) function. The \textit{budget} of a subset of vertices is the sum of the budgets of the vertices in that subset. A \textit{$k$-budgeted independent set} in $G$ is an independent set such that its budget is at most $k$. The \textit{Maximum Weight Budgeted Independent Set Problem} or MWBIS is as follows.\\\\
\textit{Maximum Weight Budgeted Independent Set Problem} (MWBIS): Given a graph $G=(V,E)$, a weight function $w: V \rightarrow \mathbb{R^+}$, a budget function $b: V \rightarrow \mathbb{Z^+}$, and a positive integer $B$, find a $B$-budgeted independent set in $G$ such that its weight is maximum among all the $B$-budgeted independent sets in $G$.\\\\
A special version of MWBIS where weight of each vertex is 1, is referred to as \textit{Maximum Budgeted Independent Set Problem} or MBIS. Bar-Noy {\em et al.}~\cite{Bar-NoyBFNS01} mentioned about the MWBIS problem while considering problems on resource allocation and scheduling. But as far as we are concerned no work has been done till date on MBIS or on MWBIS.  

We note that the Maximum Weight Independent Set Problem (MWIS) where the goal is to find a maximum weight independent set, is a special case of MWBIS where the budget of the vertices can be considered to be uniform (or same) and $B$ is larger than the sum of the budgets of all vertices. MWIS is known to be $\mathcal{NP}$-hard even if the weight of any vertex is 1 (a version which is called Maximum Independent Set Problem or MIS). It is not possible to approximate MWIS within a factor of $|V|^{1-\epsilon}$ for any $\epsilon > 0$, unless NP=ZPP \cite{Hastad97}. Moreover, even if the maximum degree of the graph is at most 3, it is not possible to get a PTAS \cite{BermanF99}. As MWBIS generalizes MWIS all the hardness results for MWIS also hold for MWBIS. But being a generalization of MWIS, one of the most studied problems in the area of approximation and graph algorithms, it opens several directions for future research. For example, like in the case of MWIS one might be interested in studying this problem in special classes of graphs where it is possible to solve the problem exactly or to get a near optimum solution in polynomial time.

The primary motivation to study MWBIS comes from the following scenario. In case of MWIS one can select as many objects (vertices) as possible without violating the adjacency property. Also no selection cost is associated with any object. In other words the cost of selection of any object is same. But in many practical applications this cost varies from object to object and it is not possible to choose a set of objects whose total cost exceeds certain limit. MWIS has a plenty of applications in several fields of computer science including Scheduling, Wireless networks, Computer graphics, Map labelling, Molecular biology and so on. As MWBIS generalizes the MWIS problem it can be applied in most of the contexts where MWIS is applicable. Now we present two motivating applications of MWBIS.

\textit{Job Scheduling in a Computer}. In Job Scheduling Problem given a set of jobs that has to be executed in a computer, the goal is to find a maximum cardinality subset of jobs that can be executed without interfering with each other (\cite{Eva06}). Each job is specified by an interval (a starting time and a finishing time) during which the job is needed to run. A pair of jobs interfere with each other if the intervals corresponding to them intersect. Thus the Job Scheduling Problem is same as the MIS problem in an interval graph, which can be solved in polynomial time (\cite{GuptaLL82},\cite{Snoeyink07}). In practice the size of the jobs may vary and depending on the number of resources needed, the overhead associated with the jobs may vary a lot (\cite{Aida00},\cite{TakefusaMNAN99}). Moreover, each job has a priority associated with it, as the level of importance may be different for different jobs. Thus we consider the following variant of Job Scheduling Problem. Given a set of jobs, a weight (priority) and a resource requirement corresponding to each job, find a set of non-interfering jobs having maximum weight such that the total amount of resources allocated for running those jobs doesn't exceed the amount of resources available (\cite{Bar-NoyBFNS01}). 

\textit{Selecting Non-interfering Set of Transmitters}. In cellular mobile communication one of the most crucial problem is to reduce the level of interference (\cite{AndrewsCH07},\cite{suh08},\cite{viswa03}). Co-channel interference occurs when two transmitters use the same frequency. Thus usually the transmitters acting on same frequencies are kept a distance apart from each other so that signals from one transmitter doesn't affect the signals of the other. Now consider the following scenario. We are given a collection of transmitters that have profits and costs associated with it. The profit of a transmitter depends on its capability and the cost is mainly maintenance cost. A pair of transmitters interfere with each other if they use same frequency, and the signals they emit, overlap with each other. Moreover, the company interested in maintaining the transmitters has a certain budget. The goal is to find a collection of non-interfering transmitters which maximizes the profit so that the total cost of the transmitters doesn't exceed the budget. Chamaret {\em et al.}~\cite{cham97} consider a variant of this problem where two transmitters are allowed to be in the final solution if the ``amount'' of overlap of the corresponding signals is bounded by a certain threshold value. However, for simplicity they assume that the profits and the costs of the transmitters are uniform. 

Both of the above mentioned problems can be modelled using MWBIS. Thus considering the importance of MWBIS it is quite interesting to study this problem. Due to the hardness results implied from MWIS one might be interested in considering the problem in simple classes of graphs. The most convenient structure to consider is a tree. In any tree MWIS can be solved in linear time (\cite{tree2}). In case of interval graph if the vertices are sorted by the right endpoints of the corresponding intervals, MIS can be solved in linear time (\cite{GuptaLL82}). Polynomial time algorithms exist for MWIS also in other classes of graphs including bipartite graphs, line graphs, circle graphs, claw free graphs (having no induced $K_{1,3}$ as a subgraph), fork free graphs and so on (\cite{Alekseev04},\cite{Edmonds65},\cite{FaudreeFR97},\cite{Gavril73},\cite{Lozin06},\cite{Minty80},\cite{nakamura01},\cite{NashG10}). 
Baker (\cite{Baker83},\cite{Baker94}) designed an algorithm for MIS in any planar graph $G=(V,E)$, which ensures a $\frac{k}{k+1}$ factor approximation, and runs in $O(8^k|V|)$ time for fixed $k$.

Another class of graphs which has caught much attention in this context is the class of $d+1$-claw free graphs. A graph is called $d+1$-claw free if it doesn't contain $K_{1,d+1}$ as an induced subgraph. There are mainly two reasons for studying this class of graphs. Firstly, the intersection graphs of many geometric families of objects are $c$-claw free for some constant $c$. An intersection graph of a set of objects is constructed by considering a vertex for each object and an edge is drawn between two vertices if the corresponding objects intersect. The families of axis-parallel unit squares form 5-claw free graphs and families of unit circles form 7-claw free graphs. Secondly, $d+1$-claw free graphs are the largest class of graphs where the MWIS problem have constant ($d$) factor approximation ratio. There is a simple greedy algorithm which gives a $d$-factor approximation for MWIS in $d+1$-claw free graphs. But the challenge is to improve the approximation factor. Bafna {\em et al.}~\cite{BafnaNR96} gave a $\frac{d+1}{2}$-factor approximation for the MIS problem in any $d+1$-claw free graph. Their algorithm is based on local improvement technique. Hurkens and Schrijver \cite{Hurkens88} show that by increasing the size of allowed improvement it is possible to achieve an approximation factor close to $\frac{d}{2}$. Chandra and Halldorsson \cite{ChandraH01a} improved the approximation factor for MWIS to $\frac{2}{3}d$. Lastly, Berman \cite{Berman00} have designed an algorithm based on local improvement strategy which achieves an approximation factor of $\frac{d+1}{2}$ in any $d+1$-claw free graph.

The MWIS problem has also been studied in bounded degree graphs. Denote the maximum degree of any graph by $\Delta$. There is a simple greedy algorithm that achieves a $\Delta$ approximation ratio for MIS in any bounded degree graph. Halldorsson and Radhakrishnan \cite{HalldorssonR94} improved this approximation ratio to $\frac{\Delta}{6}+O(1)$. Vishwanathan \cite{Halldorsson98} proposed a SDP based $\frac{\Delta \log\log \Delta}{\log \Delta}$-factor approximation algorithm for the same problem. So far that is the best known approximation for MIS in any bounded degree graph. In the weighted case Halldorsson and Lau \cite{HalldorssonL97} gave an algorithm that achieves a $\frac{\Delta+2}{3}$ approximation ratio. Lastly, Halperin \cite{Halperin00} and Halldorsson \cite{Halldorsson00} independently design algorithms for MWIS whose approximation factor matches the best known approximation factor for MIS.

\subsection{Our Results and Techniques}
We have studied the MWBIS problem mainly in several special classes of graphs. We have designed exact algorithms in case of trees, forests, cycle graphs, and interval graphs. The time complexity of all of these algorithms is bounded by a polynomial in $n$ and $B$. In all of these cases the most challenging issue to address is how to distribute the budget $B$ among the vertices. That is where the MWBIS problem becomes harder compare to MWIS. For example, in case of trees the following recursive routine solves the MWIS problem. The routine runs on the root of the tree. There could be two cases (i) the root is in the solution (ii) the root is not in the solution. In the first case the routine is recursively called on the grandchildren of the root. The solution is composed of the root and the solutions returned by the grandchildren. In the second case the routine is recursively called on the children of the root and the solution is composed of the solutions returned by them. The larger solution is returned by the routine. Note that in case of MWBIS this simple routine doesn't work, as it has no idea how to distribute the budget among the children or grandchildren of the root. We address this issue in our work by using a routine based on Optimum Resource Allocation.

We have designed approximation algorithms in case of $d+1$-claw free graphs and bounded degree graphs. We show that a simple greedy algorithm gives a $d$-factor approximation for MBIS in any $d+1$-claw free graph. If the maximum degree of the graph is bounded by $\Delta$, then the same algorithm gives a $\Delta$-factor approximation for any arbitrary graph. 

Lastly, we extend Baker's technique \cite{Baker83} to design a PTAS for MWBIS in planar graphs. Though the approach is quite similar, in this case it is crucial to figure out how the budgets of the vertices can be handled while constructing the table for dynamic programming. We use a robust resource allocation routine to resolve this issue.
\\\\\textit{Organization of the paper}. In Section \ref{sec:tree} we design algorithms for trees, forests and cycle graphs. In Section \ref{sec:claw free-bounded degree} we discuss the approximation algorithm for $d+1$-claw free graphs. Section \ref{sec:planar} is devoted to the discussion on planar graphs. In Section \ref{sec:interval} we discuss the algorithm for interval graphs.

\section{MWBIS in Forests and Cycle Graphs}\label{sec:tree}
In this section we design algorithms to solve the MWBIS problem in forests and cycle graphs. First we design an algorithm to solve MWBIS in a given tree. Then we use this algorithm to solve MWBIS in any forest and cycle graph.

\subsection{MWBIS in Trees}
Given a rooted tree $T$ we consider the MWBIS problem in it. We design a dynamic programming based algorithm for this special version of MWBIS. Let $v_1,\ldots,v_n$ be the vertices of $T$. For better understanding at first we discuss a solution which partially works for this problem.

\subsubsection{A Partial Solution of MWBIS}
A good thing about the trees is that they have very simple structure. Due to this simplicity they possess some special characteristics which are not present in general graphs. For example consider the parent-child relationship between two nodes. We can get a partial ordering of the vertices based on this relationship. The following idea explores this partial ordering. For any $i$ consider the vertex $v_i$ with weight $w_i$ and budget $b_i$. Let $T_i$ be the subtree rooted at $v_i$. Now we describe a recursive subroutine which partially solves the MWBIS problem in $T_i$. Note that we are interested in finding a maximum $B$-budgeted independent set of $T_i$. Now there could be two cases, (i) $v_i$ is in the solution, and (ii) $v_i$ is not in the solution. In the first case no child of $v_i$ could be taken in the solution as it violates the property of independent set. Thus in this case the solution is composed of $v_i$ and the solutions returned by running the subroutine on the grandchildren of $v_i$. In the second case the solution is composed of the solutions returned by running the subroutine on the children of $v_i$. The larger solution among these two is returned as the solution for $T_i$. This subroutine works well if one is interested in computing just a Maximum Weight Independent Set. But in case of MWBIS we also have the budget constraint and thus we need a scheme to decide how the remaining budget should be distributed optimally among the children or grandchildren of $v_i$. We'll resolve this issue by using a resource allocation routine.

\subsubsection{A Solution Based on Resource Allocation}
The problem of allocation of budget among the vertices is similar to the \textit{Optimum Resource Allocation Problem} (\cite{ora1},\cite{ora2}). The \textit{Optimum Resource Allocation Problem} is as follows. We are given $p$ resources and $k$ processors. Corresponding to each processor $j$ there is an efficiency function $f_j$. $f_j(p_j)$ denotes the efficiency of processor $j$ when $p_j$ resources are allocated to it. The values $f_j(p_j)$ are also given for $0 \leq p_j\leq p$ and $1\leq j\leq k$. The goal is to find an allocation of the $p$ resources to $k$ processors so that $\sum_{j=1}^k f_j(p_j)$ is maximized, where $p_j$ resources are allocated to processor $j$ and $\sum_{j=1}^k p_j=p$. The following theorem is due to Karush \cite{karush}.

\begin{theorem}\label{th:alloc}
There is a routine ALLOC($f_1$, $\ldots$,$f_k$;$p$) which solves the Optimum Resource Allocation Problem in $O(kp^2)$ time.
\end{theorem}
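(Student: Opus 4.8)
The plan is to solve the Optimum Resource Allocation Problem by dynamic programming over the processors, processing them one at a time and recording, for every possible amount of resource, the best achievable total efficiency. Concretely, I would define $F_j(q)$ to be the maximum value of $\sum_{i=1}^{j} f_i(p_i)$ taken over all nonnegative integer allocations with $\sum_{i=1}^{j} p_i = q$; that is, the best total efficiency obtainable using exactly $q$ resources on the first $j$ processors. The quantity ALLOC ultimately returns is the allocation realizing $F_k(p)$.

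The correctness rests on an optimal-substructure observation: in any optimal allocation that assigns $r$ resources to processor $j$, the remaining $q-r$ resources must be allocated optimally among processors $1,\ldots,j-1$, since otherwise we could strictly improve the total by substituting a better sub-allocation for that part without disturbing the term $f_j(r)$. This yields the recurrence
\[
F_j(q) \;=\; \max_{0 \le r \le q} \bigl\{\, F_{j-1}(q-r) + f_j(r) \,\bigr\}, \qquad 0 \le q \le p,
\]
with base case $F_1(q) = f_1(q)$. I would fill the table row by row for $j = 1, \ldots, k$, and, to recover the allocation itself rather than only its value, store for each entry the argument $r$ attaining the maximum; backtracking from $F_k(p)$ then reconstructs $p_1,\ldots,p_k$ in $O(k)$ additional time.

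For the running time, the table has $k(p+1)$ entries, and computing a single entry $F_j(q)$ requires scanning the candidate split points $r \in \{0,1,\ldots,q\}$, which costs $O(p)$ in the worst case. Multiplying the number of entries by the per-entry cost gives the claimed $O(kp^2)$ bound, and the backtracking is negligible in comparison.

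Since this is a textbook-style dynamic program, I anticipate no serious obstacle in the derivation; the one point that genuinely deserves care is the optimal-substructure justification that legitimizes the recurrence, together with a precise treatment of the boundary conditions, namely that allocations are nonnegative integers summing to \emph{exactly} $p$. Once those are fixed, both the correctness and the $O(kp^2)$ time follow directly.
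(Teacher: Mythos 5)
Your proof is correct: the dynamic program over processors with the recurrence $F_j(q) = \max_{0 \le r \le q}\{F_{j-1}(q-r) + f_j(r)\}$, together with the backtracking step, is the standard argument establishing the $O(kp^2)$ bound. The paper itself gives no proof of this theorem --- it simply cites Karush --- so your write-up supplies exactly the textbook derivation that the citation stands in for, and the boundary conventions you flag (nonnegative integer allocations summing to exactly $p$) are consistent with the problem as the paper states it.
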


To be precise ALLOC returns a $k$-vector $p_1,\ldots,p_k$ corresponding to the optimum resource allocation, and the optimum solution as well. We use ALLOC to solve MWBIS in $T$. The algorithm we design has two phases. In the first phase it computes the weight of maximum weight $j$-budgeted independent set in subtrees of $T$ that has some special property, where $j \leq B$. In the second phase the tree $T$ is traversed in top-down manner to retrieve the maximum weight $B$-budgeted independent set in $T$ computed in the first phase. Now before moving on we need the following definition. 

A subtree $T_i$ of $T$, rooted at $v_i$ is called a \textit{maximal subtree} if for any vertex $v'$ in $T_i$, $v$ is a child of $v'$ in $T$ implies $v$ is also present in $T_i$ as a child of $v'$. See Figure \ref{fig:fig1} for an illustration. We note that the maximal tree rooted at the root of $T$ is the tree $T$ itself.

\begin{figure*} 
 \centering
 \hspace{-1in}
  \begin{minipage}[c]{0.5\textwidth}
  \centering
  \includegraphics[width=40mm] 
    {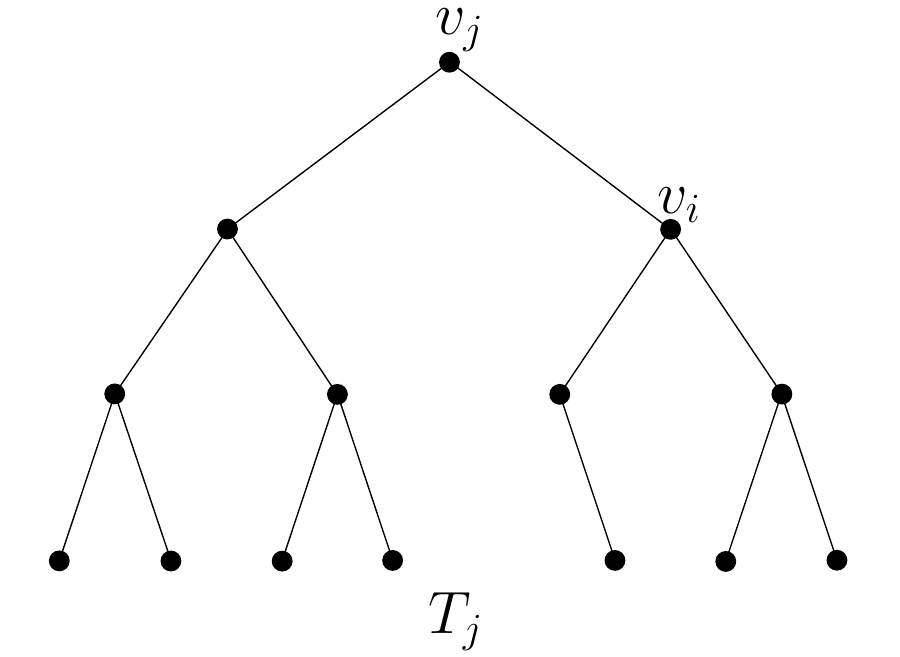}\\
    {\small (a)}\\
    \end{minipage}%
  \hspace{-1.6in}
  \begin{minipage}[c]{0.5\textwidth}
  \centering
  \includegraphics[width=42mm]
    {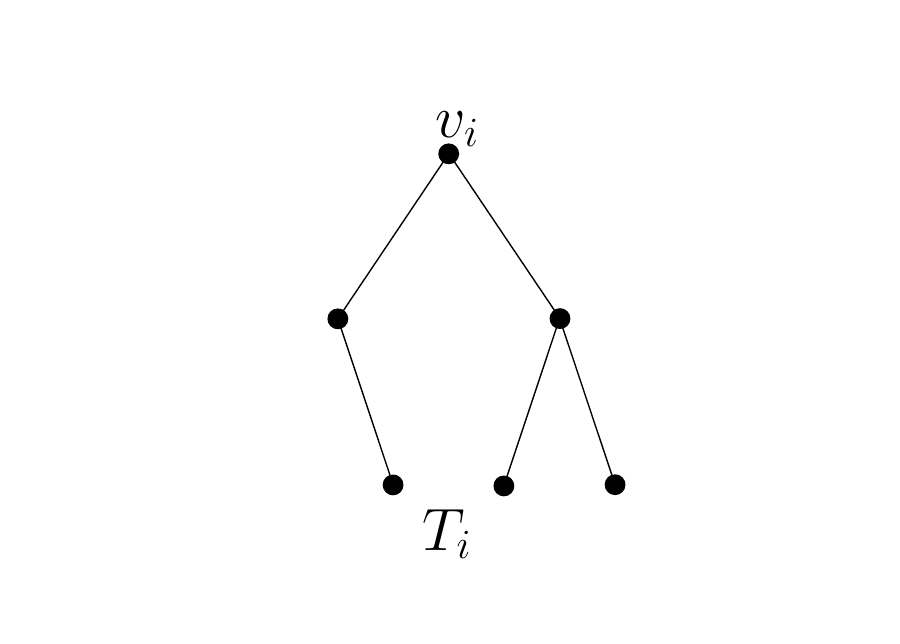}\\
    {\small (b)}\\
    \end{minipage}%
    \hspace{-1in}
    \hspace{-0.6in}
  \begin{minipage}[c]{0.5\textwidth}
  \centering
  \includegraphics[width=40mm]
    {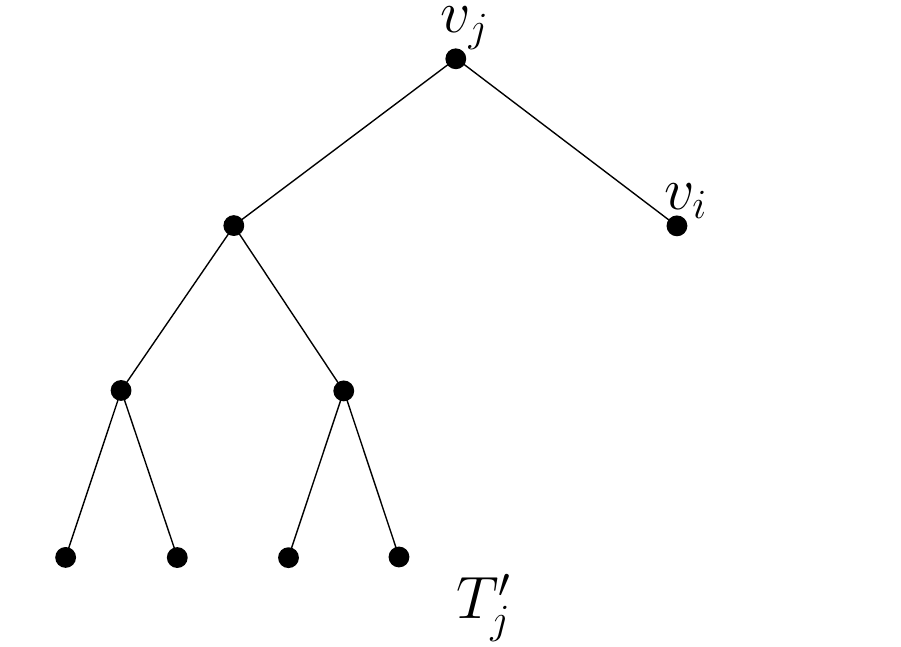}\\
    {\small (c)}\\
    \end{minipage}%
    \hspace{-1in}
 \caption{\textit{(a) A tree $T_j$ (b) the maximal subtree $T_i$ of $T_j$ rooted at $v_i$ (c) $T_j'$ is not a maximal subtree of $T_j$, as the children and grandchildren of $v_i$ are not present in $T_j'$}}
  \label{fig:fig1}
\end{figure*}

Like before consider any vertex $v_i$ with weight $w_i$ and budget $b_i$, and let $T_i$ be the maximal subtree rooted at $v_i$. In the first phase we design a routine OPT$(i,j)$ which computes the weight of maximum weight $j$-budgeted independent set in $T_i$. First we define the routine in an informal manner. Let $C_i$ and $G_i$ be the respective set of indexes of the children and grandchildren of $v_i$. As mentioned before there is two cases. If $v_i$ is selected in the solution, $b_i$ budget has already been used. The remaining budget $j-b_i$ should be distributed among the vertices contained in the maximal subtrees rooted at the grandchildren of $v_i$. This is similar to the problem of allocating $j-b_i$ resources to $|G_i|$ processors. Thus we call ALLOC with the function $f_t(p_t)$=OPT$(i_t,p_t)$, where $i_t \in G_i$ for $1\leq t\leq |G_i|$ and $p=j-b_i$. In this case the solution is the sum of $w_i$ and the optimum value computed by ALLOC. Note that for this case to be true $j$ should be greater than or equal to $b_i$. In the second case as $v_i$ is not selected ALLOC is called with $f_t(p_t)$=OPT$(i_t',p_t)$, where $i_t' \in C_i$ for $1\leq t\leq |C_i|$ and $p=j$. In this case the solution is just the optimum value computed by ALLOC. Let $s_1$ and $s_2$ be the solutions in Case (i) and Case (ii) respectively. OPT$(i,j)$ returns the maximum of $s_1$ and $s_2$. 

Note that before making a call to ALLOC we need to ensure that the values of the function $f_t$ are already computed for all $t$. Thus any top-down recursive approach does not work in this case. Instead we adapt a bottom-up dynamic programming based approach. A table $D$ is used to store the values computed by OPT. Each row of $D$ is corresponding to a maximal subtree $T_i$ and each column is corresponding to a budget $j$, where $1 \leq j\leq B$. The cell $D(i,j)$ stores two values: $w_{i,j}$, the weight of maximum weight $j$-budgeted independent set in $T_i$ and $f_{i,j}$, a flag that indicates if $v_i$ is in the solution ($f_{i,j}=1$) or not ($f_{i,j}=0$). Moreover, $D(i,j)$ stores the distribution of the budget $j$ among $v_i$ and the subtrees rooted at the children or grandchildren of $v_i$, as a vector. Thus each cell $D(i,j)$ can store at most $n+1$ values. Now we formally define the subroutine OPT$(i,j)$.

\begin{description}

\item[OPT$(i, j)$:] If $v_i$ is a leaf of $T$ and if $j \geq b_i$, then assign $w_i$ to $w_{i,j}$, 1 to $f_{i,j}$ and terminate.

\item If $v_i$ is a leaf of $T$, but $j < b_i$, then assign $0$ to both $w_{i,j}$ and $f_{i,j}$, and terminate.

\item If $j < b_i$, then assign 0 to $s_1$. Otherwise, assign the maximum solution computed by ALLOC$(f_1, \ldots, f_{|G_i|};$ $p)$ to $s_1$, where $f_t(p_t)=w_{i_t,p_t}$, $i_t \in G_i$ for $1\leq t\leq |G_i|$ and $p=j-b_i$. 

\item Assign the maximum solution computed by ALLOC$(f_1,\ldots,f_{|C_i|};p)$ to $s_2$, where $f_t(p_t)$= $w_{i_t',p_t}$, $i_t' \in C_i$ for $1\leq t\leq |C_i|$ and $p=j$. 

\item If $s_1 + w_i > s_2$, then assign $s_1+w_i$ to $w_{i,j}$, 1 to $f_{i,j}$ and store the $|G_i|$-vector returned by the first call to ALLOC in $D(i,j)$.

\item Otherwise, assign $s_2$ to $w_{i,j}$, 0 to $f_{i,j}$ and store the $|C_i|$-vector returned by the second call to ALLOC in $D(i,j)$.
\end{description}

Note that here to compute the value of $w_{i,j}$ for a subtree, the values corresponding to its children and grandchildren should be computed before. To ensure this we consider the ordering of the vertices corresponding to the postorder traversal of $T$. Let $v_1,\ldots,v_n$ be the postorder traversal of the vertices. The entries of $D$ are computed in this order. To be precise suppose $v_l$ appears before $v_m$ in this ordering, then all the entries of $D$ corresponding to $T_l$ are computed before computation of the entries corresponding to $T_m$. Moreover, before computing the values corresponding to $D(i,j)$ all the values corresponding to $D(i,j')$ are computed before for all $j' < j$. Thus while calling the routine OPT$(i,j)$ all the required values of $D$ are already computed and OPT$(i,j)$ returns the optimum solution of MWBIS in $T_i$. The algorithm is as follows.

\begin{algorithm*}[hbt]
  \caption{}
 \label{alg:tree}
 \begin{algorithmic}[1]
    \REQUIRE A tree $T$, postorder traversal $v_1,\ldots,v_n$ of the vertices of $T$, budget function $b$, a positive integer $B$, a $n\times (B+1) \times (n+1)$ table $D$ 
    \ENSURE a maximum weight $B$-budgeted independent set in $T$ and its weight\\\COMMENT{The First Phase}
    \FOR {$i=1$ to $n$}    
      \FOR {$j=0$ to $B$}
	    \STATE call OPT($i$,$j$)
      \ENDFOR	  
    \ENDFOR
    \COMMENT{The Second Phase}
    \STATE $I \leftarrow \phi$        
    \STATE $I \leftarrow$ ComputeBudgetedIndSet($I$, $n$, $B$)
    \RETURN $I, w_{n,B}$    
 \end{algorithmic}
\end{algorithm*}

\begin{algorithm*}[hbt]
  \caption{: \textbf{Procedure} ComputeBudgetedIndSet($I$, $i$, $j$)}       
 \label{alg:computeindset}
 \begin{algorithmic}[1]        
    \IF {$j==0$}
      \RETURN $I$
    \ENDIF
    \IF {$f_{i,j}==1$}
      \STATE $I \leftarrow I\cup {v_i}$
      \STATE $j \leftarrow j-b_i$
      \FOR {\textbf{each} grandchild $v_{i_t}$ of $v_i$}
        \STATE $I \leftarrow$ ComputeBudgetedIndSet($I$, $i_t$, $j_t$) \COMMENT{$j_t$ is the budget allocated to the subtree rooted at $v_{i_t}$, which is stored in $D(i,j)$}     
      \ENDFOR
    \ELSE     
      \FOR {\textbf{each} child $v_{i_t'}$ of $v_i$}
        \STATE $I \leftarrow$ ComputeBudgetedIndSet($I$, $i_t'$, $j_t'$) \COMMENT{$j_t'$ is the budget allocated to the subtree rooted at $v_{i_t'}$, which is stored in $D(i,j)$}     
      \ENDFOR        
    \ENDIF
    \RETURN $I$
 \end{algorithmic}    
\end{algorithm*}

The sole job in the second phase is to retrieve the maximum weight $B$-budgeted independent set of $T$ computed in the first phase. For that purpose we design the ComputeBudgetedIndSet procedure. Compute BudgetedIndSet($I$, $i$, $j$) traverses the tree rooted at $v_i$ recursively, starting with $v_i$ and returns the maximum weight $j$-budgeted independent set $I$ of $T_i$ computed in first phase. 
While traversing, the procedure consults the entries of $D$. If the flag in $D(i,j)$ is on, then $v_i$ was selected in the solution and thus it is added to $I$. The procedure then traverses the subtrees rooted at grandchildren of $v_i$ recursively. Note that the budget allocated to each of those subtrees is already computed and stored in $D(i,j)$. Otherwise, $v_i$ was not selected in the solution and the procedure recursively traverses the subtrees rooted at children of $v_i$. In this case also the budget allocated to each of these subtrees is already computed and stored in $D(i,j)$. The recursion bottoms out if the procedure is called on a tree where zero budget is allocated ($j=0$). As the vertices are processed in the postorder traversal order, $v_n$ must be the root of $T$. Thus a call to ComputeBudgetedIndSet($I$, $n$, $B$) returns the maximum $B$-budgeted independent set of $T$. The entry $w_{n,B}$ stores the weight of this set.

Now we discuss the time complexity of our algorithm. In the first phase Algorithm \ref{alg:tree} makes $O(nB)$ calls to OPT. The time complexity of OPT is dominated by the two calls to ALLOC. As the number of children and grandchildren can be at most $n-1$, by Theorem \ref{th:alloc} a call to ALLOC takes $O(nB^2)$ time. Thus the time complexity of the first phase is $O(n^2B^3)$. In the second phase Algorithm \ref{alg:tree} makes a single call to the recursive procedure ComputeBudgetedIndSet. As ComputeBudgetedIndSet scans each vertex at most once it runs in linear time. Thus the time complexity of this algorithm is dominated by the complexity of the first phase. Hence we obtain the following theorem.

\begin{theorem}\label{th:on tree}
The MWBIS problem can be solved in any tree in $O(n^2B^3)$ time, where $n$ is the number of vertices of the tree.
\end{theorem}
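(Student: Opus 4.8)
The plan is to establish the two assertions hidden in the statement: that Algorithm~\ref{alg:tree} returns a genuinely optimal $B$-budgeted independent set, and that it terminates within the claimed $O(n^2B^3)$ bound. Since the running-time accounting is already sketched in the discussion preceding the statement, I would spend most of the argument on correctness and treat the complexity bound briefly at the end. The natural tool for correctness is induction following the postorder in which the table $D$ is filled, with the invariant: for every vertex $v_i$ and every budget $0 \le j \le B$, the quantity $w_{i,j}$ produced by OPT$(i,j)$ equals the maximum weight of a $j$-budgeted independent set in the maximal subtree $T_i$. The base case is a leaf $v_i$, where $T_i = \{v_i\}$, so the optimum is $w_i$ if $j \ge b_i$ and $0$ otherwise—exactly the two leaf assignments in OPT.

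For the inductive step I would fix $v_i$ and $j$, assume the invariant for every child and grandchild subtree (all processed earlier in postorder), and let $S$ be an optimal $j$-budgeted independent set of $T_i$, splitting on whether $v_i \in S$. The structural fact doing the real work—and the step I expect to need the most care—is that in a tree the maximal subtrees rooted at the distinct children of $v_i$ are pairwise vertex-disjoint and have no edges between them, and likewise for the grandchildren. Granting this, when $v_i \notin S$ the set $S$ restricts to an independent set inside each child subtree with budgets summing to at most $j$, and conversely any tuple of independent sets chosen one per child subtree with budgets summing to at most $j$ reassembles into a valid $j$-budgeted independent set of $T_i$; maximizing total weight over all such splits is precisely the Optimum Resource Allocation Problem with $f_t(p_t) = w_{i_t', p_t}$, so by Theorem~\ref{th:alloc} the call yielding $s_2$ returns this optimum. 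Symmetrically, when $v_i \in S$ every child is excluded, $S \setminus \{v_i\}$ lives in the grandchild subtrees with total budget at most $j - b_i$, and the call yielding $s_1$ returns the optimal allocation there; hence $\max(s_1 + w_i, s_2)$ is the optimum for $T_i$, closing the induction.

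A subtle mismatch I would flag explicitly is that ALLOC distributes the resource exactly ($\sum_t p_t = p$), whereas a budgeted independent set is only required to use budget at most $j$. This gap is harmless because $w_{i,j}$ is monotone nondecreasing in $j$—a $j$-budgeted set is also a $j'$-budgeted set for $j' \ge j$—so insisting that the full budget be handed out never lowers the optimum. I would record this monotonicity as a one-line preliminary claim so that the reduction to Theorem~\ref{th:alloc} is clean. Once the invariant is in place, the second phase is routine to verify: ComputeBudgetedIndSet$(I,n,B)$ reads the flags $f_{i,j}$ and the stored allocation vectors top-down to reconstruct a set of weight exactly $w_{n,B}$, and since $v_n$ is the root we have $T_n = T$, so this weight is the global optimum.

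Finally, for the time bound I would argue that the first phase issues $O(nB)$ calls to OPT (one per pair $(i,j)$), each dominated by its two invocations of ALLOC on at most $n-1$ processors with budget at most $B$, costing $O(nB^2)$ apiece by Theorem~\ref{th:alloc}; multiplying gives $O(n^2B^3)$, and the second phase, which touches each vertex at most once, contributes only $O(n)$ and is subsumed. I expect no difficulty here beyond stating it cleanly. Overall, the genuine mathematical content is concentrated in the inductive decomposition and the monotonicity remark; everything else is bookkeeping that the algorithm's description already anticipates.
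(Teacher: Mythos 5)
Your proposal is correct and follows essentially the same route as the paper: the same postorder dynamic program over maximal subtrees, the same reduction of budget distribution to the Optimum Resource Allocation Problem via ALLOC, and the same $O(nB)\cdot O(nB^2)$ accounting for the running time. The one thing you add that the paper leaves implicit is the monotonicity remark reconciling ALLOC's exact allocation ($\sum_t p_t = p$) with the ``budget at most $j$'' semantics of the DP entries---a worthwhile clarification, but not a different proof.
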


\subsection{MWBIS in Forests}
We are given a forest $F=\{H_1,\ldots,H_k\}$, where each $H_i$ is a tree. Let $n$ and $n_i$ be the number of vertices of $F$ and $H_i$ respectively for $1\leq i\leq k$. Note that to solve MWBIS in $F$ the main issue is to decide how to distribute the budget $B$ among the trees in $F$. Thus here also we can use the ALLOC routine to resolve this issue. First we solve MWBIS in $H_i$ using Algorithm \ref{alg:tree} with budget $B$, for each $1 \leq i\leq k$. Then we make a call to ALLOC with $f_t(p_t)=OPT(H_t,p_t)$ for $1\leq t\leq k$ and $p=B$. ALLOC returns the maximum solution of MWBIS in $F$. The maximum weight $B$-budgeted independent set of the forest can be retrieved by traversing each tree in it using the ComputeBudgetedIndSet procedure.

To solve MWBIS in $H_i$ $O(n_i^2B^3)$ time is needed. Thus in total for all $1\leq i\leq k$ we need $O(B^3\sum_{i=1}^k n_i^2)$ = $O(n^2B^3)$ time. The additional call to ALLOC require $O(kB^2)=O(nB^2)$ time. Thus we have the following corollary of Theorem \ref{th:on tree}.

\begin{cor}\label{cor:forest}
The MWBIS problem can be solved in any forest in $O(n^2B^3)$ time, where $n$ is the number of vertices of the forest.
\end{cor}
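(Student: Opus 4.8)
The plan is to reduce MWBIS on a forest to a single invocation of the resource allocation routine ALLOC over the constituent trees, treating Theorem \ref{th:on tree} as a black box that handles each tree. The whole argument is a composition of Theorems \ref{th:alloc} and \ref{th:on tree} together with one elementary counting bound.

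First I would isolate the structural fact that drives the decomposition. Since the trees $H_1,\ldots,H_k$ are vertex-disjoint and no edge of $F$ joins two distinct trees, a set $I\subseteq V$ is independent in $F$ if and only if $I=\bigcup_{i=1}^k I_i$ with each $I_i$ an independent set of $H_i$; furthermore $w(I)=\sum_i w(I_i)$ and $b(I)=\sum_i b(I_i)$. Hence a maximum weight $B$-budgeted independent set of $F$ is obtained by picking, for each $i$, a budget $p_i\ge 0$ with $\sum_i p_i\le B$ and then a maximum weight $p_i$-budgeted independent set of $H_i$. Because $OPT(H_i,\cdot)$ is nondecreasing (any $p$-budgeted solution is also a $p'$-budgeted solution for $p'\ge p$), we lose nothing by requiring $\sum_i p_i=B$, which is exactly the form expected by ALLOC.

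Next I would run Algorithm \ref{alg:tree} on each $H_i$ with budget $B$; by Theorem \ref{th:on tree} its first phase computes $OPT(H_i,p)$, the weight of a maximum weight $p$-budgeted independent set of $H_i$, for every $0\le p\le B$ (these are the root entries $w_{\cdot,p}$ of the table for $H_i$). I then call ALLOC with the $k$ efficiency functions $f_t(p_t)=OPT(H_t,p_t)$ and total resource $p=B$. By Theorem \ref{th:alloc} it returns the allocation $(p_1,\ldots,p_k)$ maximizing $\sum_t OPT(H_t,p_t)$ subject to $\sum_t p_t=B$, together with the optimum value; by the structural fact this value equals the optimum for $F$, and the set itself is recovered by calling ComputeBudgetedIndSet on each $H_t$ with its allocated $p_t$.

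For the running time, the first phase over all trees costs $\sum_{i=1}^k O(n_i^2 B^3)=O\!\left(B^3\sum_{i=1}^k n_i^2\right)$, and since $\sum_i n_i=n$ we have $\sum_i n_i^2\le\left(\sum_i n_i\right)^2=n^2$, so this is $O(n^2B^3)$; the single ALLOC call adds only $O(kB^2)=O(nB^2)$, which is dominated, giving the claimed $O(n^2B^3)$ bound. The only point requiring care --- and hence the ``hard part'' --- is justifying that the trees interact solely through the shared budget while independence is entirely internal to each tree, and observing the monotonicity of $OPT(H_i,\cdot)$ that legitimizes the equality constraint in the ALLOC formulation; the rest is a routine composition of the two cited results.
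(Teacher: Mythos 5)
Your proposal is correct and follows essentially the same route as the paper: solve each tree $H_i$ with Algorithm \ref{alg:tree} for budget $B$, combine the per-tree optima via a single call to ALLOC with $f_t(p_t)=OPT(H_t,p_t)$ and $p=B$, and bound the total cost by $O\bigl(B^3\sum_i n_i^2\bigr)=O(n^2B^3)$ plus a dominated $O(kB^2)$ term. The only additions are your explicit justifications of the decomposition and of the monotonicity of $OPT(H_i,\cdot)$, which the paper leaves implicit.
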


\subsection{MWBIS in Cycle Graphs}
We consider the MWBIS problem in a given cycle graph $C_n$ with $n$ vertices. We note that a cycle graph is a closed path. We show that the routine which solves the MWBIS problem in any simple path can be called multiple times to solve the problem in any cycle graph. The following is a crucial observation for our solution and follows from the definition of independent set.

\begin{obs}\label{obs:exclude vertex}
Any independent set in a connected graph excludes at least one vertex of the graph.
\end{obs}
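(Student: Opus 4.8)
The plan is to argue by contradiction, exploiting the single fact that connectivity forces the existence of at least one edge. First I would note that since the graph $G=(V,E)$ is connected and (as is implicit in the setting of this section, where we work with paths and cycles) has at least two vertices, the edge set $E$ is nonempty. I then fix an arbitrary edge $\{u,v\}\in E$, which will serve as the witness for the whole argument.

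Next, suppose toward a contradiction that some independent set $S$ of $G$ excludes no vertex, i.e.\ that $S=V$. Then in particular both endpoints $u$ and $v$ of the chosen edge lie in $S$. But $u$ and $v$ are adjacent in $G$, and this directly contradicts the defining property of an independent set, namely that no two of its members are joined by an edge. Therefore the assumption $S=V$ is untenable, and every independent set must leave out at least one vertex, which is exactly the claim.

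I expect no substantive obstacle here; the entire content of the observation is that, on a connected graph with two or more vertices, the presence of at least one edge rules out the full vertex set $V$ from being independent. The only point worth flagging is the degenerate single-vertex graph, where $E=\emptyset$ and the statement is vacuous (or false) depending on convention. Since this observation is applied only to cycle graphs $C_n$ and, through them, to paths on $n\ge 2$ vertices, we may safely take $G$ to contain at least one edge, and under that mild standing assumption the contradiction argument above is complete.
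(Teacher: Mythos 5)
Your proof is correct and matches the paper's (unstated) reasoning: the paper simply asserts that the observation follows from the definition of independent set, and your contradiction argument via a single edge is the obvious formalization of that. You are also right to flag the one-vertex graph as a degenerate exception, which is harmless here since the observation is only applied to cycle graphs.
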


Applying Observation \ref{obs:exclude vertex} on any cycle graph yields that any maximum independent set excludes at least one vertex. We note that removing any vertex from a cycle graph gives a simple path. The algorithm for solving MWBIS in $C_n$ is as follows. For each vertex $v$ in $C_n$ we remove $v$ from $C_n$ and solve the problem in the corresponding simple path with $n-1$ vertices using Algorithm \ref{alg:tree}. The maximum among these $n$ solutions yields the optimum solution for $C_n$. Thus from Theorem \ref{th:on tree} we have the following corollary.

\begin{cor}\label{cor:cycle}
MWBIS can be solved in any cycle graph in $O(n^3B^3)$ time, where $n$ is the number of vertices of the cycle graph.
\end{cor}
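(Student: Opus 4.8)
The plan is to reduce the problem on $C_n$ to $n$ instances of the problem on a path, each of which is solvable by Algorithm \ref{alg:tree} since a path is a tree. First I would fix an arbitrary vertex $v$ of $C_n$ and form the graph $C_n - v$ obtained by deleting $v$ together with its two incident edges; this is a simple path on $n-1$ vertices. Running Algorithm \ref{alg:tree} on this path, rooted arbitrarily at one endpoint, yields the weight of a maximum weight $B$-budgeted independent set of $C_n$ that avoids $v$. Repeating this over all $n$ choices of $v$ and returning the best of the $n$ computed solutions is the algorithm; the remaining work is to argue correctness and to bound the running time.

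For correctness I would prove the two inequalities that together pin down the optimum. For soundness, observe that any $B$-budgeted independent set $S$ of the path $C_n - v$ is also a $B$-budgeted independent set of $C_n$: the only edges present in $C_n$ but absent from $C_n - v$ are the two edges incident to $v$, and since $v \notin S$ neither of these edges has both endpoints in $S$, so $S$ remains independent and its budget is unchanged. Hence the maximum over $v$ of the path optima is at most the optimum on $C_n$. For completeness I would invoke Observation \ref{obs:exclude vertex}: since $C_n$ is connected, an optimal $B$-budgeted independent set $S^*$ of $C_n$ omits some vertex $v^*$. Deleting $v^*$ leaves $S^*$ intact as a feasible $B$-budgeted independent set of the path $C_n - v^*$ of the same weight, so the path optimum for $v^*$ is at least $w(S^*)$. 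Therefore the best of the $n$ path solutions has weight exactly equal to the optimum on $C_n$, and the set realizing it is recovered by the ComputeBudgetedIndSet procedure on the corresponding path.

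For the running time, each path $C_n - v$ has $n-1$ vertices, so by Theorem \ref{th:on tree} solving MWBIS on it costs $O((n-1)^2 B^3) = O(n^2 B^3)$. Since we perform this computation once for each of the $n$ vertices, the total time is $O(n^3 B^3)$, matching the claimed bound.

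The argument is essentially a clean reduction, so I do not anticipate a deep obstacle; the one point that genuinely requires care is the completeness direction, where the whole reduction hinges on Observation \ref{obs:exclude vertex} guaranteeing that the optimum omits a vertex — without it, deleting a vertex could destroy the optimal solution. I would also verify the degenerate small case $n = 3$, where the two neighbors of $v$ become the adjacent endpoints of the resulting path, to confirm that the path-to-cycle embedding never reintroduces a forbidden adjacent pair into $S$.
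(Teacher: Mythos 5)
Your proposal is correct and follows essentially the same route as the paper: delete each vertex in turn, solve MWBIS on the resulting $(n-1)$-vertex path via Algorithm \ref{alg:tree}, take the best of the $n$ answers, and justify completeness by Observation \ref{obs:exclude vertex}, giving $n \cdot O(n^2B^3) = O(n^3B^3)$ time. Your explicit soundness/completeness split is a slightly more careful write-up of the same argument the paper gives.
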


We note that considering the simplicity in the structure of path the time complexity to solve MWBIS in any path can be reduced using a better algorithm. This in turn reduces the time complexity of MWBIS in any cycle graph.

\section{MBIS in d+1-claw free graphs}\label{sec:claw free-bounded degree}
%
We consider the MBIS problem in a given $d+1$-claw free graph $G$. We design a $d$-factor approximation algorithm for this problem. Consider the following simple greedy algorithm at first. Fix any ordering of the vertices. Process the vertices in this order in the following way. Select a vertex in the solution if none of its neighbor is already selected. Otherwise, skip it and process the next vertex. This simple algorithm gives a $d$-factor approximation for the Maximum Independent Set (MIS) Problem. Consider the solution $S$ given by this algorithm and any maximum solution $O$ of MIS. We delete all the vertices from $S$ and $O$ which are in $S\cap O$. We construct a bipartite graph with the set of vertices $S\cup O$. There is an edge between a vertex of $S$ to a vertex of $O$ if they are neighbors in $G$. As $G$ is $d+1$-claw free there can be at most $d$ neighbors (independent) of a vertex of $S$ in this bipartite graph. Also note that there is no isolated vertex in this bipartite graph which is in $O$. If not, then our algorithm could return a solution of size one more by adding it to the solution, as none of its neighbor is selected. Thus cardinality of $O$ can be at most $d|S|$. Thus our greedy algorithm gives a $d$-factor approximation for MIS in $G$. 

Now if we consider a maximum solution $O'$ of MBIS and construct a bipartite graph like before, there could be an isolated vertex in this graph which is in $O'$. This is because it could be the case that though none of its neighbor is selected in $S$, it was not chosen due to unavailability of budget. As the number of such isolated vertices could be large depending on the instance, we cannot ensure a $d$-factor approximation using this algorithm. In the next subsection we modify this greedy algorithm to get a $d$-factor approximation. The modified algorithm is called Minimum Budget First or MBF.

\subsection{Minimum Budget First: The Algorithm}
We have made a simple modification to our previous greedy algorithm to design MBF. Instead of processing the vertices in any arbitrary order the vertices are ordered in non-decreasing order of their budgets. And, then they are processed like before. Let $v_1,\ldots,v_n$ be the set of vertices in non-decreasing order of their budgets and $b_1,\ldots,b_n$ be their corresponding budgets. Now we formally describe the algorithm.

\begin{algorithm*}[hbt]
 \caption{Minimum Budget First}
\label{alg:mbf}
\begin{algorithmic}[1]
   \REQUIRE A $d+1$-claw free graph $G$, set $\{v_1,\ldots,v_n\}$ of vertices of $G$ ordered in non-decreasing order of their budgets, budget $b_i$ of $v_i$ for $1\leq i\leq n$, a positive integer $B$
   \ENSURE a $B$-budgeted independent set in $G$
   \STATE $A_1 \leftarrow v_1$
   \STATE $B_2\leftarrow B-b_1$
   \FOR {$i=2$ to $n$}
     \IF {$B_i < b_i$}
       \RETURN $A_{i-1}$ 
     \ENDIF
     \IF {$A_{i-1} \cup \{v_i\}$ forms an independent set}
       \STATE $A_i \leftarrow A_{i-1} \cup \{v_i\}$
       \STATE $B_{i+1} \leftarrow B_{i}-b_i$
     \ELSE
       \STATE $A_i \leftarrow A_{i-1}$
       \STATE $B_{i+1} \leftarrow B_{i}$
     \ENDIF  	  
   \ENDFOR
   \RETURN $A_n$
\end{algorithmic}
\end{algorithm*}

MBF processes the vertices in the following way. Before iteration $i$ let $A_{i-1}$ be the set of vertices already chosen to be in the solution and $B_i$ be the remaining budget which can be used. If the remaining budget $B_i$ is less than the budget of the current vertex the algorithm returns $A_{i-1}$ as the solution. Otherwise, if $v_i$ doesn't have any neighbor in $A_{i-1}$, $A_{i-1} \cup \{v_i\}$ is assigned to $A_i$ and $B_{i}-b_i$ is assigned to $B_{i+1}$. In the remaining case $A_{i-1} $ is assigned to $A_i$ and $B_{i}$ is assigned to $B_{i+1}$. The algorithm terminates if the condition $B_i < b_i$ is true or all the vertices are processed. Now we move on towards the analysis of this algorithm.

\subsection{The Analysis}
We prove that this simple greedy algorithm returns a solution which is at least a factor of $\frac{1}{d}$ of any optimum solution. We use a charging argument to bound the cardinality of any optimum set. The charging argument ensures that at most $d$ vertices in an optimum set are charged against a vertex in the solution returned by MBF. This in turn shows that the cardinality of the optimum set can be at most $d$ times of the cardinality of the solution returned by MBF. 

Suppose $A$ be the independent set returned by MBF and $A^*$ be an optimum solution. Without loss of generality we assume $A \cap A^*=\phi$, otherwise, we could charge any common vertex in one to one manner. Let $A=\{v_{i_1},v_{i_2},\ldots,v_{i_k}\}$ such that $b_{i_1}\leq b_{i_2}\leq \ldots \leq b_{i_k}$. 
For $1 \leq j\leq k$ we define the sets $Y_j, X_j, A_j^*$ in an inductive way. In base case consider $v_{i_1}$. Let $Y_1=A^*$ and $X_1$ be the vertices in $Y_1$ adjacent to $v_{i_1}$. Note that $Y_1$ is an independent set and as $G$ is a $d+1$-claw free graph there could be at most $d$ vertices in $A^*$ adjacent to $v_{i_1}$. Thus $|X_1| \leq d$. If $X_1$ is non-empty, set $A_1^*$ to be $X_1$. Otherwise, let $m_1=\arg \min_{v_t \in Y_1} b_t$. Set $A_1^*$ to be $\{m_1\}$. Thus in either case $|A_1^*| \leq d$. Now let $c_1 = \min_{v_t \in A_1^*} b_t$. As $v_{i_1}$ is the first vertex chosen by MBF it has minimum budget and thus $b_{i_1} \leq c_1$. Now in induction case let $A_t^*$ is defined for all $t < j$. Let $Y_j=Y_{j-1}\setminus A_{j-1}^*$. If $Y_j$ is $\phi$, set $X_j$ and $A_j^*$ to $\phi$. Otherwise, do the following. Let $X_j$ be the vertices in $Y_j$ adjacent to $v_{i_j}$. Thus $|X_j| \leq d$. If $X_j$ is non-empty, set $A_j^*$ to be $X_j$. Otherwise, let $m_j=\arg \min_{v_t \in Y_j} b_t$. Set $A_j^*$ to be $\{m_j\}$. Thus in either case $|A_j^*| \leq d$. Now let $c_j = \min_{v_t \in A_j^*} b_t$. 

Later we will prove that $A^* \setminus \cup_{j=1}^k A_j^*=\phi$. As $|A_j^*|\leq d$ for $1\leq j\leq k$ this implies $|A^*|\leq kd$ and thus MBF is a $d$-factor approximation algorithm. 

Note that if $Y_j$ is $\phi$, $c_j$ is not defined. Consider $j \geq 1$ such that $c_j$ is defined. The following lemma proves that $b_{i_j} \leq c_j$. 

\begin{lemma}\label{lem:lessbudget}
For any $j \geq 1$ if $Y_j$ is non-empty, then $b_{i_j} \leq c_j$. 
\end{lemma}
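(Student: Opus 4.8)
The plan is to prove the slightly stronger statement that \emph{every} vertex of $Y_j$ has budget at least $b_{i_j}$, from which the lemma follows instantly. Indeed, in both branches of the construction we have $A_j^* \subseteq Y_j$ (either $A_j^* = X_j \subseteq Y_j$, or $A_j^* = \{m_j\}$ with $m_j \in Y_j$), so the stronger fact gives $c_j = \min_{v_t \in A_j^*} b_t \geq \min_{v_t \in Y_j} b_t \geq b_{i_j}$, which is exactly the claim. So I would reduce the lemma to showing $b_t \geq b_{i_j}$ for all $v_t \in Y_j$. Before starting, I would fix two pieces of bookkeeping. First, since MBF scans the vertices in non-decreasing budget order and never de-selects a vertex, the chosen vertices appear during the scan precisely in the order $v_{i_1}, v_{i_2}, \ldots, v_{i_k}$; in particular the members of $A$ processed strictly before $v_{i_j}$ are exactly $v_{i_1}, \ldots, v_{i_{j-1}}$. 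Second, the recurrence $Y_{l+1} = Y_l \setminus A_l^*$ gives the nesting $Y_j \subseteq Y_{j-1} \subseteq \cdots \subseteq Y_1 = A^*$.

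The heart of the argument is a proof by contradiction. I would fix $v_t \in Y_j$ and assume $b_t < b_{i_j}$, so that $v_t$ is scanned strictly before $v_{i_j}$. Since $v_t \in A^*$ while $A \cap A^* = \emptyset$, MBF did not select $v_t$, so at the moment $v_t$ was processed either its budget test ($B_i < b_i$) or its independence test must have failed. The key point is that the budget test cannot have failed: a failed budget test makes MBF return immediately, yet MBF later reaches and selects $v_{i_j}$, so it did not halt at the earlier position of $v_t$. Hence the remaining budget was at least $b_t$ when $v_t$ was scanned, and the only remaining reason for skipping $v_t$ is that it had a neighbor already in the selected set at that time.

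Call that already-selected neighbor $v_{i_l}$; it was selected before $v_t$ was processed, hence before $v_{i_j}$, which forces $l < j$. Now I close the loop using the construction of $A_l^*$. Because $l < j$, the nesting gives $v_t \in Y_j \subseteq Y_l$, and $v_t$ is adjacent to $v_{i_l}$, so by definition $v_t \in X_l$. Therefore $X_l$ is non-empty and $A_l^* = X_l \ni v_t$, which means $v_t$ is deleted when passing to $Y_{l+1} = Y_l \setminus A_l^*$. Thus $v_t \notin Y_{l+1} \supseteq Y_j$, contradicting $v_t \in Y_j$. This contradiction establishes $b_t \geq b_{i_j}$ for every $v_t \in Y_j$, and hence $c_j \geq b_{i_j}$.

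The step I expect to be the main obstacle is the elimination of the budget-based rejection: one must argue carefully from the single early-return structure of MBF that the mere fact of reaching $v_{i_j}$ guarantees that every vertex scanned earlier — in particular $v_t$ — passed its budget test, so an already-selected neighbor is the only possible cause of the skip. Once that is pinned down, the remainder is routine bookkeeping: tracking the order in which selected vertices appear and using the monotonicity of the nested sets $Y_j$ so that the deletion performed at step $l$ indeed excludes $v_t$ from $Y_j$.
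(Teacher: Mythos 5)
Your proof is correct, and it reaches the lemma by a genuinely different route than the paper's. Both arguments in fact establish the same strengthening, namely $b_{i_j} \leq \min_{v_t \in Y_j} b_t$ (which gives the lemma since $A_j^* \subseteq Y_j$), and both rely on the fact that no vertex of $Y_j$ is adjacent to any of $v_{i_1},\ldots,v_{i_{j-1}}$ — you use this in contrapositive form: an adjacency of $v_t$ to $v_{i_l}$ with $l<j$ would place $v_t$ in $X_l = A_l^*$ and hence remove it from $Y_{l+1} \supseteq Y_j$. The real difference is in how the budget is handled. The paper argues by strong induction on $j$: the hypothesis $b_{i_t}\leq c_t$ for $t<j$ bounds the budget MBF has already spent by $\sum_{t<j} c_t$, and the feasibility of $A^*$ (the disjoint sets $A_1^*,\ldots,A_{j-1}^*,Y_j$ all live inside $A^*$) then shows the residual budget is at least $\sum_{v_l\in Y_j} b_l$, so MBF could afford the cheapest vertex of $Y_j$. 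You dispense with both the induction and the arithmetic by exploiting the control flow of Algorithm~\ref{alg:mbf}: a failed budget test causes an immediate return, so the mere fact that MBF goes on to select $v_{i_j}$ certifies that every vertex scanned earlier passed its budget test, leaving adjacency as the only possible reason for a skip. Your version is shorter and, notably, never uses the $B$-budgetedness of $A^*$; the price is that it leans on the specific early-termination behaviour of MBF, whereas the paper's accounting argument would survive a variant of the algorithm that merely skips an unaffordable vertex and keeps scanning. For the algorithm as written, your argument is complete (the only loose ends — consistency of the budget ordering with the selection order under ties, and the vacuous base case $j=1$ — are handled the same way, implicitly, in the paper).
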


\begin{proof}
We prove this lemma using strong induction on $j$. In the base case $b_{i_1} \leq c_1$, as we have argued before. Thus consider the induction step. If $Y_j$ is non-empty, then by definition $Y_t$ is also non-empty for all $t<j$ and hence $c_t$ is defined. By induction $b_{i_t} \leq c_t$ for $1\leq t\leq j-1$. We note that for each $t < j$ the set of vertices adjacent to $v_{i_t}$, i.e, $X_t$ (if any) has been deleted from $Y_{t+1}$. Thus the vertices in $Y_j$ are not adjacent to the vertices $\{v_{i_1},\ldots,v_{i_{j-1}}\}$ chosen by MBF. Thus in iteration $j$ MBF can choose any vertex from $Y_j$ if it is left with sufficient budget. Now the budget used by MBF before iteration $j$ is $\sum_{t=1}^{j-1} b_{i_t}$. Thus the budget left is  $B -\sum_{t=1}^{j-1} b_{i_t} \geq B - \sum_{t=1}^{j-1} c_t \geq \sum_{v_l \in Y_j} b_l \geq \min_{v_l \in Y_j} b_l$, where the first inequality follows by induction. Thus the way MBF chooses a vertex it follows that $b_{i_j}$ should be lesser than or equal to the budget of the vertex in $Y_j$ with minimum budget. Thus $b_{i_j} \leq c_j$.
\end{proof}

Now we prove the following lemma which bounds the cardinality of $A^*$.

\begin{lemma}\label{lem:cardA*}
Suppose $A$ be the independent set returned by MBF and $A^*$ be an optimum solution, then $|A^*| \leq d|A|$. 
\end{lemma}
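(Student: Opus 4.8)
The plan is to bound $|A^*|$ by exhibiting an assignment of every vertex of $A^*$ to some vertex of $A$ so that each $v_{i_j} \in A$ receives at most $d$ vertices, namely the vertices in $A_j^*$. Concretely, I would prove the claim announced just before the statement: $A^* \setminus \bigcup_{j=1}^k A_j^* = \phi$, i.e. every optimum vertex lies in some $A_j^*$. Since the construction guarantees $|A_j^*| \le d$ for each $j$, and the sets $A_j^*$ are pairwise disjoint (because $A_j^* \subseteq Y_j$ and $Y_{j+1} = Y_j \setminus A_j^*$ removes exactly the vertices just assigned), this immediately yields $|A^*| = |\bigcup_{j=1}^k A_j^*| = \sum_{j=1}^k |A_j^*| \le kd = d|A|$, which is the desired inequality.

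First I would recall the telescoping structure: $Y_1 = A^*$ and $Y_{j+1} = Y_j \setminus A_j^*$, so $Y_{k+1} = A^* \setminus \bigcup_{j=1}^k A_j^*$. Thus the whole statement reduces to showing $Y_{k+1} = \phi$. The heart of the argument is a proof by contradiction: suppose $Y_{k+1}$ is nonempty, and pick any $v_l \in Y_{k+1} \subseteq A^*$. Since $v_l$ survived all $k$ rounds of deletion, for every $j$ with $1 \le j \le k$ it was never placed in any $X_j$; in particular $v_l$ is not adjacent to any $v_{i_j} \in A$. This is the crucial observation — a vertex of $A^*$ that is never deleted has no neighbor among the vertices MBF actually selected.

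Next I would argue that such a $v_l$ should have been selectable by MBF, deriving the contradiction from the budget accounting. Because $v_l$ is nonadjacent to all of $A = \{v_{i_1}, \ldots, v_{i_k}\}$, the independence test in MBF would never reject $v_l$; so the only reason MBF did not include $v_l$ must be that it ran out of budget before reaching $v_l$, i.e. the algorithm terminated at the line \textbf{return} $A_{i-1}$ when the remaining budget dropped below the current vertex's budget. Here I would invoke Lemma~\ref{lem:lessbudget}: since each $Y_j$ (for $j \le k$) is nonempty (as $Y_{k+1}$ is nonempty, all earlier $Y_j$ are too), we have $b_{i_j} \le c_j$ for all $j$, so $\sum_{j=1}^k b_{i_j} \le \sum_{j=1}^k c_j$. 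Combining with the fact that each $c_j$ is the minimum budget in $A_j^*$ and the $A_j^*$ are disjoint subsets of $A^*$, one shows $\sum_{j=1}^k c_j \le w(A^*)$ in the budget sense, so the total budget spent by $A$ plus $b_l$ stays within $B$; hence MBF had enough budget to also pick $v_l$. Since $v_l$ passes the independence test and the budget is available, MBF would have selected $v_l$, contradicting $v_l \notin A$.

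The main obstacle I anticipate is the budget bookkeeping in this final contradiction: one must show precisely that enough budget remained for $v_l$ when MBF reached it, which requires carefully coupling the order in which MBF processes vertices (nondecreasing budget) with the inductive budget bound from Lemma~\ref{lem:lessbudget}. The subtlety is that $v_l$ could have a large budget, but since $v_l \in Y_j$ for every $j$ up to $k$, the minimality defining $c_j$ forces $b_{i_j} \le c_j \le b_l$ at each stage, and the disjointness of the $A_j^*$ inside $A^*$ controls $\sum_j c_j$ against the total optimum budget $\le B$. Making this chain of inequalities airtight — and ensuring the termination condition $B_i < b_i$ cannot fire before $v_l$ is examined — is where the real work lies; the disjointness and cardinality bounds are routine once the nonadjacency and budget availability of $v_l$ are established.
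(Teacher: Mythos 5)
Your proposal matches the paper's proof essentially step for step: reduce the lemma to showing $A^* \setminus \bigcup_{j=1}^k A_j^* = \phi$, then derive a contradiction by observing that a surviving vertex $v_l$ is nonadjacent to every vertex MBF selected (since $\bigcup_j X_j \subseteq \bigcup_j A_j^*$) and that the remaining budget satisfies $B - \sum_j b_{i_j} \geq B - \sum_j c_j \geq b_l$ (via Lemma~\ref{lem:lessbudget} and the disjointness of the sets $A_j^*$ inside $A^*$), so MBF would have picked $v_l$. One small inaccuracy: your side remark that $c_j \le b_l$ at each stage is not guaranteed when $X_j$ is nonempty (there $c_j$ is a minimum over neighbors of $v_{i_j}$ in $Y_j$, which need not include $v_l$), but nothing in your argument actually needs it --- the bound $\sum_j c_j + b_l \le \mathrm{budget}(A^*) \le B$, which you also state, is what carries the proof, exactly as in the paper.
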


\begin{proof}
For each $1 \leq j\leq k$, $|A_j^*| \leq d$. As there are $k$ such sets $\sum_{j=1}^k |A_j^*| \leq dk=d|A|$. Thus it is sufficient to show that $A^* \setminus \cup_{j=1}^k A_j^*=\phi$. 

For the sake of contradiction suppose $A^* \setminus \cup_{j=1}^k A_j^*$ is non-empty. Let $v_t \in A^* \setminus \cup_{j=1}^k A_j^*$ such that $v_t$ has minimum budget $b_t$ among the vertices in $A^* \setminus \cup_{j=1}^k A_j^*$. Note that this means $Y_{j}\neq \phi$, for $1\leq j\leq k$. Now the budget used by MBF is $\sum_{j=1}^{k} b_{i_j}$. Thus the budget remaining is $B-\sum_{j=1}^{k} b_{i_j} \geq B-\sum_{j=1}^{k} c_j \geq b_t$, where the first inequality follows from Lemma \ref{lem:lessbudget}. Consider the set of vertices adjacent to $v_{i_j}$ in $A^*$ for $1 \leq j\leq k$, i.e, $\cup_{j=1}^k X_j$. By construction $\cup_{j=1}^k X_j \subseteq \cup_{j=1}^k A_j^*$. Thus the vertices in $A^* \setminus \cup_{j=1}^k A_j^*$ are not adjacent to the vertices $\{v_{i_1},\ldots,v_{i_{k}}\}$ chosen by MBF. As the remaining budget is at least $b_t$ after selecting the $k$ vertices, MBF could have chosen $v_t$ in its solution. Hence this is a contradiction and $A^* \setminus \cup_{j=1}^k A_j^*=\phi$, which completes the proof of this lemma.
\end{proof}

Lemma \ref{lem:cardA*} shows that MBF is a $d$-factor approximation algorithm for MBIS in any $d+1$-claw free graph. Note that MBF processes the vertices in non-decreasing order of their budgets. Thus to get this ordering we need to sort the vertices which takes $O(n\log n)$ time. In iteration $i$ MBF checks if $v_i$ is adjacent to the vertices already selected. Thus each edge is accessed at most once over all iterations. This takes $O(|E|)$ time if the graph is implemented using adjacency list data structure. Thus in total MBF runs in $O(|V|\log |V|+|E|)$ time. Hence we obtain the following theorem.

\begin{theorem}\label{th:clawfree}
There is a $d$-factor approximation algorithm for MBIS in any $d+1$-claw free graph $G=(V,E)$, that runs in $O(|V|\log |V|+|E|)$ time.
\end{theorem}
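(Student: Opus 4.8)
The plan is to derive the theorem in two independent pieces: the $d$-factor approximation guarantee, which is essentially already packaged by the preceding lemmas, and the running-time bound, which follows from a standard amortization of the independence checks performed by Algorithm \ref{alg:mbf}.

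First I would verify that the set $A$ returned by MBF is a feasible solution to MBIS, i.e.\ a $B$-budgeted independent set. Independence is immediate from the algorithm, since a vertex $v_i$ is added to $A_i$ only when $A_{i-1}\cup\{v_i\}$ forms an independent set. The budget constraint holds because the quantity $B_i$ tracks the remaining budget and no vertex is added once $B_i < b_i$; a short induction on $i$ gives $B_{i+1} = B - \sum_{v\in A_i} b(v) \ge 0$ throughout (assuming the benign case $b_1 \le B$, else the optimum is empty), so the total budget of $A$ is at most $B$. With feasibility in hand, the approximation ratio follows directly from Lemma \ref{lem:cardA*}: for any optimum solution $A^*$ we have $|A^*| \le d|A|$, hence $|A| \ge \frac{1}{d}|A^*|$, which is exactly a $d$-factor approximation for the (unweighted) MBIS objective of maximizing cardinality.

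Next I would bound the running time. Sorting the vertices into non-decreasing budget order costs $O(|V|\log|V|)$. In the main loop, the only nontrivial work in iteration $i$ is the test whether $A_{i-1}\cup\{v_i\}$ is independent, which reduces to checking whether $v_i$ has a neighbor already in the partial solution. Scanning the adjacency list of $v_i$ costs time proportional to $\deg(v_i)$, so summing over all iterations the total cost of these checks is $O(\sum_{v}\deg(v)) = O(|E|)$, each edge being inspected at most once from each endpoint. All remaining per-iteration bookkeeping is $O(1)$. Combining the two phases yields the claimed $O(|V|\log|V| + |E|)$ bound.

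The bulk of the difficulty has already been absorbed into Lemmas \ref{lem:lessbudget} and \ref{lem:cardA*}; the charging argument there is precisely what rescues the $d$-factor bound in the presence of the budget constraint, where the naive bipartite comparison fails because an optimal vertex could be isolated merely for lack of remaining budget rather than through adjacency. For the theorem proper, the one point needing care is confirming that the independence test genuinely amortizes to $O(|E|)$; this relies on the adjacency-list representation so that iteration $i$ costs $O(\deg(v_i))$ rather than $O(|V|)$, and I would state that data-structure assumption explicitly.
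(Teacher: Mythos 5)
Your proposal is correct and follows essentially the same route as the paper: the approximation ratio is obtained by invoking Lemma \ref{lem:cardA*}, and the running time is the cost of sorting by budget plus an amortized $O(|E|)$ for the adjacency-list independence checks. The only addition is your explicit feasibility check of the returned set, which the paper leaves implicit but which is a harmless (and welcome) bit of extra care.
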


\textit{Remarks}. (1) The approximation ratio of MBF is tight indeed. Consider a $k_{1,d}$ graph whose center vertex has budget 1, each leaf has budget 2 and $B=2d$. In this case the maximum $B$-budgeted independent set has the cardinality $d$ while MBF returns the set containing only the center vertex.

(2) One might be tempted to extend MBF for weighted $d+1$-claw free graphs. A natural extension is to select the vertex with maximum weight-budget ratio (call this algorithm Maximum Weight-Budget Ratio First or MWBRF). But this heuristic doesn't work at all. We construct a family of $d+1$-claw free graphs such that for any natural number $M$ there is a graph $G$ in the family such that the solution of MWBRF on $G$ is as small as $\frac{1}{M}$ times of the optimum solution. For any $M$ we take a $K_{1,d}$ such that both the weight and budget of the central vertex is 1. The weight and budget of each leaf are $\frac{M}{d}$ and $M$. Also let the total budget $B$ is equal to $dM$. The weight-budget ratio of central vertex and any leaf are 1 and $\frac{1}{d}$. Thus MWBRF outputs the central vertex as its solution. The optimal is the set of all leaves. Thus the solution of MWBRF is $1 = \frac{1}{M}.M$, i.e, $\frac{1}{M}$ times of the optimum solution.

(3) The main reason that MBF gives a $d$-factor approximation in $d+1$-claw free graphs is that a vertex in the solution returned by MBF can be adjacent to at most $d$ vertices of the maximum solution. Moreover, any graph for which a vertex in the solution returned by MBF possess this bounded degree property, MBF yields an approximation factor equal to that bound. Thus in a bounded degree graph with maximum degree $\Delta$ MBF gives a $\Delta$-factor approximation and we have the following observation.
%

\section{MWBIS in Planar Graphs}\label{sec:planar}
In the early 80's Baker gave a PTAS for the MIS problem in any planar graph using a nontrivial dynamic programming based approach \cite{Baker83}. We adapt Baker's technique to get a PTAS for the MWBIS problem in any planar graph. Our approach is almost similar to Baker's technique except the part where the dynamic programming is extended to handle the budgets of the vertices. But for the sake of completeness we discuss the whole technique. Let $G$ be any given planar graph. The overall idea is as follows. We delete some vertices of $G$ to get a subgraph $G'$ consists of disjoint connected components such that no vertex of any component is adjacent to a vertex of any other component. Thus MWBIS can be solved independently in each of those components for any budget $0\leq B'\leq B$. To be precise in each of those components MWBIS can be solved exactly in linear time using a dynamic programming based approach. Once the solution in each component is known, the exact solution of MWBIS in $G'$ can be found by using the ALLOC routine of Section \ref{sec:tree}. We use a simple layering technique to construct the subgraph $G'$ such that the maximum solution of MWBIS in $G'$ is at least $\frac{k}{k+1}$ fraction of the maximum solution of MWBIS in $G$ for any fixed $k > 0$. Thus we get a $\frac{k}{k+1}$-approximation of MWBIS in $G$. Next we discuss each step in detailed manner. But before moving on we have some definitions.

We assume that we are given a planar embedding of $G$. All the edges and vertices in the unbounded face of this embedding forms a boundary of $G$. We define the level of each vertex of $G$ with respect to this boundary. The level of the vertices on this boundary is 1. The level of a vertex is $i$ ($> 1$) if it appears on the unbounded face of the graph obtained by deletion of level $j$ vertices from $G$ for all $1\leq j\leq i-1$. A planar graph is called $k$-outerplanar if it has a planar embedding where the level of any vertex is at most $k$. With all these definitions we start the discussion of our approach to solve MWBIS in $G$.

\subsection{Layering Technique}
Fix a positive integer $k$. For each $0 \leq i\leq k$, let $G_i$ be the graph obtained by deleting all the level $j$ vertices from $G$, where $j$ is congruent to $i$ mod $k+1$. Note that $G_i$ is a subgraph of $G$ consists of connected components, such that no vertex in any component is adjacent to a vertex of any other component. Moreover, each connected component is a $k$-outerplanar graph. Let $W_{MAX}$ be the weight of any maximum solution of MWBIS in $G$. Now by pigeon hole principle there is an index $0\leq r\leq k$ such that at most $\frac{1}{k+1}$ fraction of the weight $W_{MAX}$ comes from the level $j$ vertices, where $j$ is congruent to $r$ mod $k+1$. Thus we have the following observation.

\begin{obs}\label{obs:maxplanar}
There exists an index $r$ such that weight of any maximum solution of MWBIS in $G_r$ is at least $\frac{k}{k+1}W_{MAX}$, where $0\leq r\leq k$. 
\end{obs}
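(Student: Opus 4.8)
The plan is to prove Observation \ref{obs:maxplanar} by a direct averaging argument applied to a fixed optimum solution of MWBIS in $G$, exhibiting an explicit feasible solution in some $G_r$ whose weight already meets the bound. First I would fix a maximum weight $B$-budgeted independent set $S$ in $G$, so that $w(S) = W_{MAX}$. For each residue $0 \le i \le k$, let $S_i$ be the set of vertices of $S$ whose level is congruent to $i$ modulo $k+1$. Since the level of every vertex falls in exactly one residue class modulo $k+1$, the sets $S_0,\ldots,S_k$ partition $S$, and hence $\sum_{i=0}^{k} w(S_i) = w(S) = W_{MAX}$.

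Next I would apply the pigeonhole principle to the $k+1$ nonnegative weights $w(S_0),\ldots,w(S_k)$: since they sum to $W_{MAX}$, at least one index $r$ satisfies $w(S_r) \le \frac{1}{k+1} W_{MAX}$. I would then argue that $S \setminus S_r$ is a feasible solution of MWBIS in $G_r$. Feasibility has three components, all of which I would verify directly: (i) $S \setminus S_r$ is an independent set in $G$, being a subset of the independent set $S$; (ii) its budget is $b(S) - b(S_r) \le b(S) \le B$, so it remains a $B$-budgeted set; and (iii) every vertex of $S \setminus S_r$ survives in $G_r$, because $G_r$ deletes exactly the vertices whose level is congruent to $r$ modulo $k+1$, and those are precisely the vertices of $S_r$ that we removed. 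Hence $S \setminus S_r$ is a valid $B$-budgeted independent set in $G_r$.

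Finally, the weight of this feasible set is $w(S) - w(S_r) \ge W_{MAX} - \frac{1}{k+1} W_{MAX} = \frac{k}{k+1} W_{MAX}$, so the weight of the maximum MWBIS solution in $G_r$, being at least the weight of any feasible solution, is at least $\frac{k}{k+1} W_{MAX}$, which is the claim.

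I expect the only point requiring care, and the feature that distinguishes this from the classical Baker argument for MIS, to be step (ii): I must confirm that discarding the vertices in $S_r$ does not break feasibility with respect to the budget. For MWBIS this is immediate, since removing vertices can only decrease the total budget, so the budget constraint is preserved for free; the averaging is performed on the weight, which is exactly the quantity we want to keep large. Everything else reduces to routine bookkeeping about levels and the definition of $G_r$.
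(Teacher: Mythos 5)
Your proof is correct and follows essentially the same route as the paper: the paper also fixes a maximum solution, applies the pigeonhole principle over the $k+1$ residue classes of levels modulo $k+1$ to find an index $r$ contributing at most $\frac{1}{k+1}W_{MAX}$, and concludes that the surviving part of the solution is feasible in $G_r$. Your write-up merely makes explicit the feasibility checks (independence, budget, survival in $G_r$) that the paper leaves implicit.
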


Suppose MWBIS can be solved in any $G_i$ exactly for $0\leq i\leq k$. We return the maximum weight among these $k+1$ solutions. Then by Observation \ref{obs:maxplanar} we have a $\frac{k}{k+1}$-factor approximation for MWBIS in $G$. Now for any $\epsilon > 0$, setting $k=\ceil{\frac{1}{\epsilon}}$ gives an $(1-\epsilon)$-factor approximation for MWBIS in any planar graph. Now consider the MWBIS problem in any $G_i$. Note that each $G_i$ is a collection of $k$-outerplanar graphs. Later we give a proof sketch of the following theorem.

\begin{theorem}\label{th:k-outerplanar}
For any $k > 0$, MWBIS can be solved in any $k$-outerplanar graph $G'=(V',E')$ in $O(8^k|V'|B^3)$ time.
\end{theorem}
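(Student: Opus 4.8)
The plan is to exploit the bounded-width structure of $k$-outerplanar graphs. Such a graph has treewidth at most $3k-1$, and equivalently (this is the form Baker uses) its planar embedding can be sliced from the outside in so that the part already processed is separated from the part still to come by a \emph{boundary} of at most $3k$ vertices, with these boundaries arranged in a tree-like fashion computable from the embedding in $O(|V'|)$ time. So first I would fix such a decomposition and process the graph one piece at a time, at every stage remembering only the interaction of the processed subgraph with the at most $3k$ vertices of the current boundary $Z$.

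The dynamic program then mirrors Baker's independent-set DP, augmented with one extra index for the budget. For each boundary $Z$, each subset $S\subseteq Z$ recording which boundary vertices are taken into the independent set (discarding any $S$ that is not independent in $G'$), and each integer $0\le j\le B$, the table stores the maximum weight of a $j$-budgeted independent set of the already-processed subgraph whose trace on $Z$ is exactly $S$. Since $|Z|\le 3k$ there are at most $2^{3k}=8^k$ choices of $S$ per boundary, which is precisely the source of the $8^k$ factor, and there are $O(|V'|)$ boundaries, each carrying a budget vector of length $B+1$.

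The genuinely new step, and the one I expect to be the main obstacle, is the \emph{merge} of two (or several) partial solutions meeting at a common boundary. When processed pieces are combined, their weights add but so do their budgets, so for a fixed common boundary state $S$ and a target budget $j$ one must optimise over all splits of the budget among the pieces being merged, while subtracting the budget of the shared boundary vertices in $S$ exactly once rather than once per piece. This is exactly an instance of the Optimum Resource Allocation Problem, so I would resolve it by invoking ALLOC (Theorem \ref{th:alloc}): treating the pieces meeting at the boundary as the processors, their budget-indexed weight vectors as the efficiency functions $f_t$, and the available budget as the resource. Verifying that the local independence constraints on $Z$ and the single-counting of boundary budgets are enforced consistently across the merge is the delicate bookkeeping here.

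Finally I would argue the time bound exactly as in the tree case (Theorem \ref{th:on tree}). For each of the $O(|V'|)$ pieces and each of the $8^k$ boundary states, the $B+1$ budget entries are filled by calling ALLOC once per budget value, and each such call costs $O(B^2)$ by Theorem \ref{th:alloc}; this gives $O(B^3)$ work per (piece, state) pair and hence the claimed $O(8^k|V'|B^3)$ overall. Correctness follows by induction over the decomposition: a boundary state $S$ is a valid separator certificate, so every $B$-budgeted independent set of $G'$ is consistent with a unique sequence of boundary states and is therefore recovered by the DP, while independence and the budget constraint are enforced locally at each boundary. Storing the optimising split at each cell lets one retrieve the actual maximum-weight set by a top-down traversal, as in the ComputeBudgetedIndSet routine.
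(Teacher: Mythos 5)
Your proposal is correct in substance and follows the same overall strategy as the paper: a Baker-style dynamic program whose table is indexed by the independent-set trace on a small separator together with a budget value in $\{0,\dots,B\}$, with the merge of partial solutions reduced to the Optimum Resource Allocation Problem and solved by ALLOC (Theorem \ref{th:alloc}), yielding $O(8^k|V'|B^3)$. The one genuine difference is the decomposition you use. You invoke the treewidth bound (treewidth at most $3k-1$, hence separators of at most $3k$ vertices and $2^{3k}=8^k$ states per separator), whereas the paper stays entirely inside Baker's original framework: it defines level-$i$ slices of the planar embedding explicitly, each slice having two boundaries with at most one vertex per level, so a level-$k$ table has $4^k$ boundary-pair states and the extra $2^k$ factor in the merge comes from summing over the shared middle boundary $\bar b$ --- the $8^k$ is assembled differently but lands in the same place. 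The paper's route is more concrete (the slices and their adjacency come directly from the embedding, and the extension step that adds the level-$i$ edge $(u,v)$ to a merged level-$(i-1)$ table is spelled out), while your route is more general: phrased over an arbitrary tree decomposition it immediately gives the bounded-treewidth extension that the paper only poses as an open direction in its Discussion. One point to be careful about in either formulation, which you correctly flag but do not fully resolve (and which the paper also treats only lightly), is that when two pieces share boundary vertices selected in $S$, both the weight \emph{and} the budget of those vertices are counted in each piece's table, so the ALLOC split $p_1+p_2=B'$ must be adjusted (or the tables redefined to charge each boundary vertex to exactly one side) to avoid double-counting the budget; this is routine bookkeeping but it is the step most likely to go wrong in a full write-up.
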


Assuming Theorem \ref{th:k-outerplanar} is true we show how the solutions in the components of $G_i$ can be merged to get an exact solution for $G_i$. 

\subsection{Merging the Solutions}
Consider the graph $G_i$ for a fixed $i$. Let $G_i$ consists of the set $\{G_{i1},\ldots,G_{il}\}$ of $l$ $k$-outerplanar graphs. We solve the MWBIS problem in each graph $G_{it}$ for budget $B'$, where $1\leq B'\leq B$. Then we make a call to ALLOC($f_1$, $\ldots$,$f_l$;$p$) with $f_t(p_t)=OPT(G_{it},p_t)$ for $1\leq t\leq l$ and $p=B$. ALLOC returns the maximum solution of MWBIS in $G_i$. Thus for all $0\leq i\leq k$ the $k+1$ calls to ALLOC takes $O(knB^2)$ time. Thus we get the following corollary from Theorem \ref{th:k-outerplanar}.

\begin{cor}
There is a PTAS for MWBIS in planar graphs that runs in $O(8^{O(\frac{1}{\epsilon})}nB^3)$ time.
\end{cor}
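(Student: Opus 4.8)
The plan is to adapt Baker's dynamic programming for $k$-outerplanar graphs, augmenting every table entry with an extra budget coordinate and invoking the resource-allocation routine ALLOC of Theorem~\ref{th:alloc} wherever the decomposition branches. First I would use the structural fact underlying Baker's method: a $k$-outerplanar graph admits a rooted tree decomposition of width at most $3k-1$, that is, a tree $\mathcal{T}$ with $O(|V'|)$ nodes in which every bag $X_t$ has at most $3k$ vertices, and such a decomposition can be built in time linear in $|V'|$ for fixed $k$. Since each bag has at most $3k$ vertices, it has at most $2^{3k}=8^k$ subsets, which is where the factor $8^k$ will originate.

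Next I would set up the table. For a node $t$, an independent subset $S\subseteq X_t$, and a budget $0\le j\le B$, let $D[t,S,j]$ denote the maximum weight of an independent set $I$ in the subgraph induced by all vertices appearing in the subtree of $\mathcal{T}$ rooted at $t$, subject to $I\cap X_t=S$ and $b(I)\le j$. The value we ultimately need, namely the optimum for the whole component under a given budget $B'$, is $\max_S D[\mathrm{root},S,B']$, and reading this off for each $B'$ is exactly what the merging step of Section~\ref{sec:planar} consumes. The entries are filled bottom-up in postorder, mirroring the tree routine of Section~\ref{sec:tree}.

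The heart of the argument is the recurrence at a node $t$ with children $t_1,\dots,t_c$. Having committed to a configuration $S$ on $X_t$, the subproblems living in the distinct child subtrees interact only through the shared boundary vertices $X_t\cap X_{t_i}$; conditioned on $S$ these subproblems become independent, so the sole remaining coupling is that the budget left after paying for $S$ must be split among the children. This is precisely the Optimum Resource Allocation Problem: for each child $t_i$ I form an efficiency function $f_i(p)$ equal to the best value of $D[t_i,S_i,p]$ over configurations $S_i$ that agree with $S$ on $X_t\cap X_{t_i}$, correcting the weight and budget bookkeeping so that vertices shared between $X_t$ and a child bag are charged exactly once, and then call $\mathrm{ALLOC}(f_1,\dots,f_c;\,j-b(S))$. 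Adding $w(S)$ to the value ALLOC returns yields $D[t,S,j]$, with the base case at leaves handled directly.

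Finally I would tally the running time and name the main obstacle. There are $O(|V'|)$ nodes, and for each we range over at most $8^k$ configurations $S$ and at most $B+1$ budgets $j$; each such triple triggers one ALLOC call costing $O(c_t B^2)$ by Theorem~\ref{th:alloc}, where $c_t$ is the number of children of $t$. Since $\sum_t c_t=O(|V'|)$, the total is $O(8^k\,|V'|\,B\cdot B^2)=O(8^k|V'|B^3)$, as claimed, and a top-down pass analogous to ComputeBudgetedIndSet recovers the actual set. The delicate step is the efficiency-function construction in the recurrence: getting the double counting of boundary vertices exactly right, so that the independence hypothesis ALLOC requires genuinely holds and the fixed configuration $S$ stays consistent across all children, is the only place where the budgeted version departs from Baker's unweighted argument.
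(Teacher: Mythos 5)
Your proposal is correct in substance but takes a genuinely different route through the key technical step. The paper follows Baker's original slice-based formulation: it decomposes a (simply nested) $k$-outerplanar graph into level-$i$ slices, indexes each table by a pair of boundary bit-vectors together with a budget coordinate $B'$, and merges adjacent slices by calling ALLOC over the split of $B'$ between the two pieces while enumerating the $2^k$ configurations of the shared boundary (the $4^k$ index pairs times the $2^k$ middle boundaries is where its $8^k$ comes from). You instead invoke the structural fact that $k$-outerplanar graphs have treewidth at most $3k-1$ and run the budget-augmented dynamic program over a tree decomposition, with $8^k=2^{3k}$ arising as the number of subsets of a bag; the point of contact with the paper is identical, namely that once a boundary configuration is fixed the only remaining coupling between branches is the budget, which is exactly an instance of the Optimum Resource Allocation Problem solved by ALLOC. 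Both arguments are sound, and the bookkeeping issue you flag (charging shared boundary vertices exactly once in both weight and budget) is the same one the paper handles by subtracting the weights of duplicated boundary vertices after each merge. Your version buys generality: it extends verbatim to bounded-treewidth graphs, which the paper explicitly leaves as an open direction in its Discussion section, whereas the paper's version stays closer to Baker's original presentation and avoids quoting the treewidth bound. Two small caveats: you prove only the $k$-outerplanar subroutine (the analogue of Theorem~\ref{th:k-outerplanar}) and lean on the paper's layering and component-merging steps for the rest of the corollary, which is fine but should be stated as such; and your per-node cost accounting should note that the child efficiency functions $f_i$ must be precomputed by a single pass over child configurations (grouped by their trace on $X_t\cap X_{t_i}$) rather than recomputed for each configuration $S$, otherwise an extra $8^k$ factor creeps in --- harmless for the $8^{O(1/\epsilon)}$ claim of the corollary, but worth making explicit.
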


\subsection{MWBIS in k-Outerplanar Graphs}
To focus on the dynamic programming technique we consider the MWBIS problem in a special class of $k$-outerplanar graphs. The technique can be adapted to solve MWBIS in any general $k$-outerplanar graph in a similar way as Baker did. Consider a planar embedding of any $k$-outerplanar graph $G$. A level $i$ face in $G$ is a cycle consists of level $i$ vertices. $G$ is called \textit{simply nested} if it contains exactly one level $i$ face for all $i$. In the remaining of this subsection we consider MWBIS in any given simply nested $k$-outerplanar graph $G$. We assume that a planar embedding of $G$ is given.

The overall idea is as follows. We divide $G$ into ``slices'' where each ``slice'' is a subgraph of $G$ containing at least one vertex of each level. We solve MWBIS in each of these slices using dynamic programming. For any such ``slice'' at first MWBIS is solved in its induced subgraph containing edges and vertices of level at most $i$. Then this solution is used to get a solution for the induced subgraph of the ``slice'' containing edges and vertices of level at most $i+1$. At last the solution in all ``slices'' are merged to get a solution for the whole graph. 

We define a level $i$ slice corresponding to an edge on any level $i$ face in an inductive way. Each level $i$ slice has two boundaries that separate it from other level $i$ slices. To be precise the slices corresponding to any two consecutive edges on the level $i$ face share a common boundary. For any edge on the level 1 face the boundaries of the slice are the end vertices. The slice contains the edge itself and its vertices. Now say the level $i-1$ slices are already defined for $i > 1$. Let $e_1,\ldots,e_p$ be the edges on the level $i$ face in counterclockwise order. Suppose $e_1=(u,v)$, where $v$ is an endpoint of $e_2$ as well. Let $u'$ be a level $i-1$ vertex such that a line segment can be drawn between $u$ and $u'$ without crossing any edge (see Figure \ref{fig:fig2}(a)). Also let $e_1',\ldots,e_l'$ be the edges on the level $i-1$ face in counterclockwise order, where $u'$ is an endpoint of $e_1'$ and $e_l'$. The first slice boundary of $e_1$ and the second slice boundary of $e_p$ are same, and consists of $u$ and the vertices in the first slice boundary of $e_1'$. The boundary between the slices of $e_j$ and $e_{j+1}$ is defined in the following way for $1\leq j\leq p-1$. Let $x$ be the common end vertex of $e_j$ and $e_{j+1}$, and $y$ be a level $i-1$ vertex such that a line segment can be drawn between $x$ and $y$ without crossing any edge. Suppose $m$ be the index such that $y$ is a common vertex of $e_m'$ and $e_{m+1}'$. Then the slice boundary between $e_j$ and $e_{j+1}$ consists of $x$ and the vertices in second slice boundary of $e_m'$. A slice for an edge contains the edge itself, its end vertices, the boundary vertices, any edge between the boundary vertices, and all the edges and vertices between the boundaries. See Figure \ref{fig:fig2}(b) for an illustration.

\begin{figure*} 
\hspace{.14in}
  \begin{minipage}[c]{0.5\textwidth}
  \centering
  \includegraphics[width=60mm] 
    {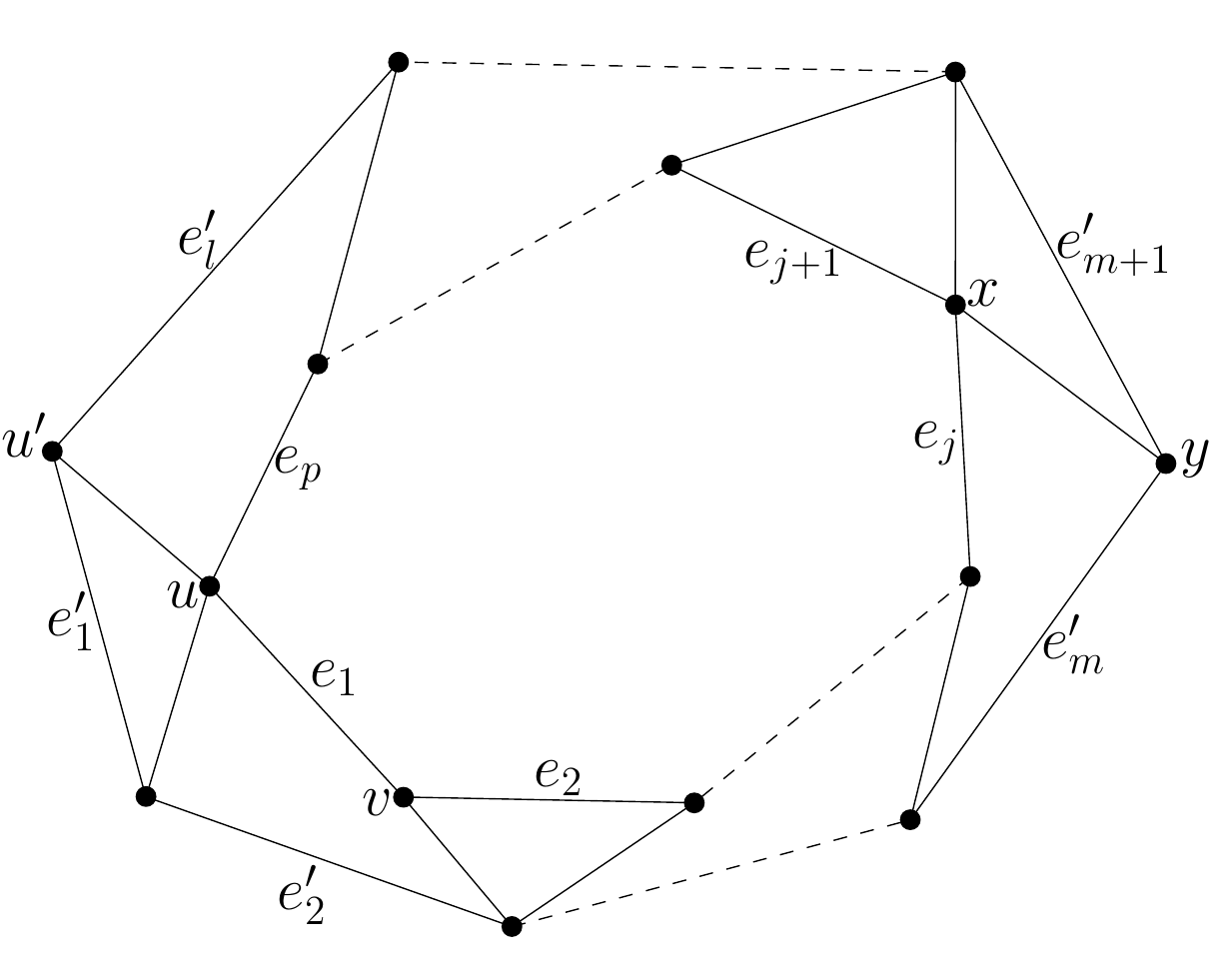}\\
    {\small (a)}\\
    \end{minipage}%
 \hspace{-.1in}
  \begin{minipage}[c]{0.5\textwidth}
  \centering
  \includegraphics[width=60mm]
    {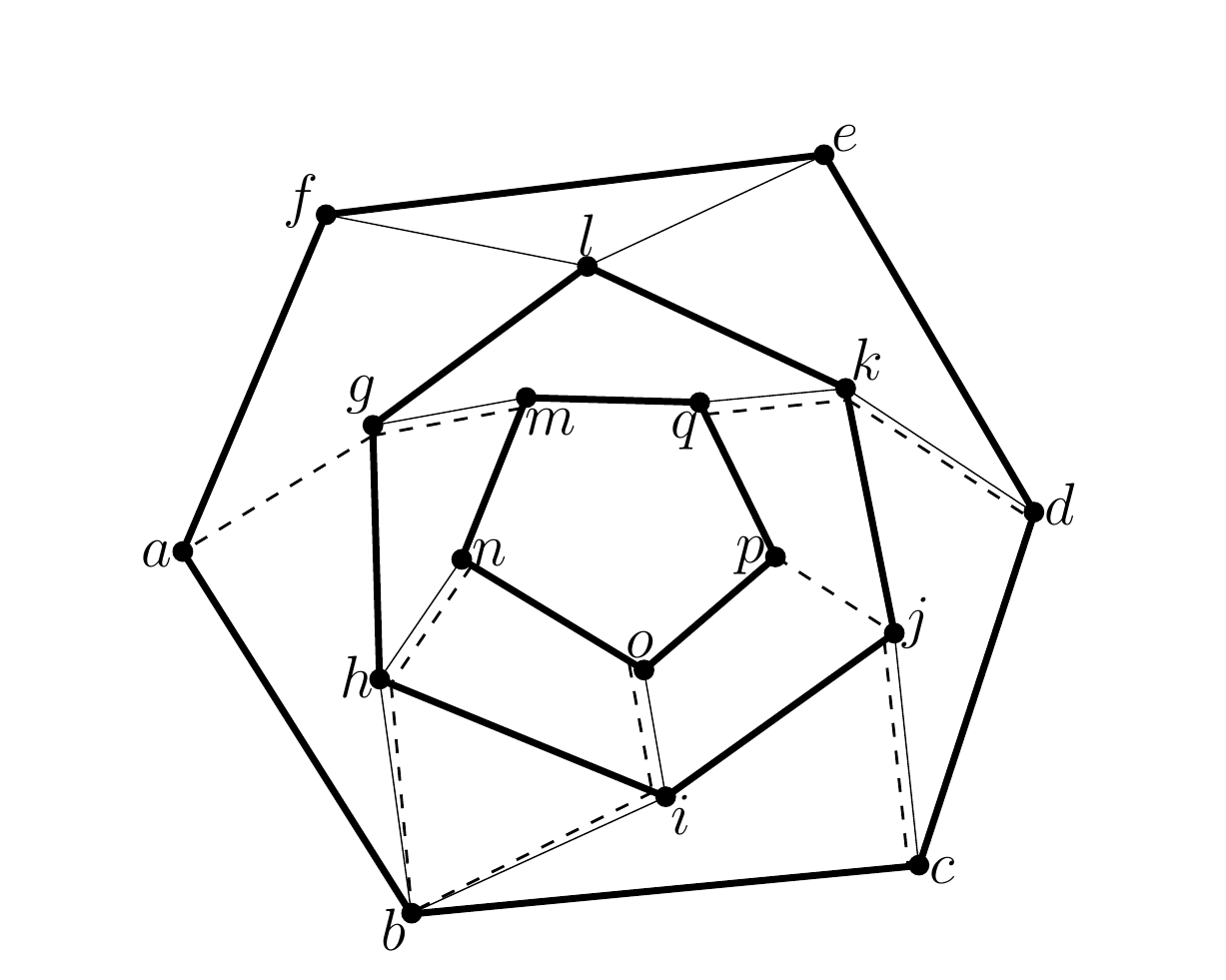}\\
    {\small (b)}\\
    \end{minipage}%
   \hspace{-3.6in}
  \caption{\textit{(a) Construction of slice boundaries. (b) Example of level 3 slices in a 3-outerplanar graph. Slice boundaries are shown using dashed lines. The face edges are shown using bold lines.}}
  \label{fig:fig2}  
\end{figure*}

\textit{Dynamic Programming}. For any level $i$ slice we maintain a table indexed by a length $i+1$ vector. The first $i$ values in this vector are binary tuples $(a_j,b_j)$ for $1\leq j\leq i$. $a_j$ and $b_j$ are corresponding to the level $j$ vertices in the first and second boundary of the slice. Each such collection of $i$ binary tuples represents a combination of the boundary vertices; 1 denotes the vertex is in the combination. The last value in the vector is an integer between $0$ and $B$. Thus we denote an index by the 3-tuple $(\bar{a},\bar{b},B')$, where $\bar{a},\bar{b}$ are two bit vectors corresponding to first and second boundary, and $B'$ is an integer. For a table $T$, we denote the entry corresponding to the index $(\bar{a},\bar{b},B')$ by $T[\bar{a},\bar{b},B']$. $T[\bar{a},\bar{b},B']$ stores the weight of any optimum solution of MWBIS in the slice with budget at most $B'$, such that the solution includes all the vertices in the combination. The value corresponding to an index is invalid if the combination of vertices is not an independent set, or the budget of the combination exceeds $B'$. Thus a level $i$ table stores $4^{i}B$ entries. A table for a level $i$ ($i > 1$) slice is computed by merging relevant tables corresponding to level $i-1$ slices. The merging process is as follows.

\textit{Merging of Two Tables}. To merge two tables $T_1$ and $T_2$ corresponding to two level $i$ slices $s_1$ and $s_2$, the slices should be adjacent, i.e. second boundary of $s_1$ must be same as the first boundary of $s_2$. The new table $T$ is corresponding to the subgraph which is the union of the two slices. For each pair of length $i$ vectors $\bar{a}$ and $\bar{c}$, and an integer $1\leq B'\leq B$, we compute the entry $T[\bar{a},\bar{c},B']$. For each length $i$ vector $\bar{b}$, we make a call to ALLOC($f_1$,$f_2$;$p$), where $f_1(p_1)=T_1[\bar{a},\bar{b},p_1]$, $f_2(p_2)=T_2[\bar{b},\bar{c},p_2]$, and $p=B'$. As some boundary vertices appear in both slices we need to subtract their weights from the solution returned by ALLOC. The maximum solution over all the values of $\bar{b}$ is stored in $T[\bar{a},\bar{c},B']$.

\textit{The Algorithm: Construction of Tables for the Slices}. For any level 1 slice, the graph consists of only two vertices and the entries of the table are computed trivially. Now consider any level $i$ ($i > 1$) slice corresponding to an edge $(u,v)$. Let $u'$ and $v'$ be the level $i-1$ vertices in the first and second boundary of the slice respectively. If $u'=v'$, the table can be computed trivially. Otherwise, say $e_1,\ldots,e_m$ be the edges between $u'$ and $v'$ on the level $i-1$ face in counterclockwise order. Suppose the table corresponding to each $e_j$ is already computed for $1\leq j\leq m$. All the tables corresponding to these $m$ edges are merged together. First the table for $e_1$ and $e_2$ are merged. Then the new table is merged with the table of $e_3$ and so on. The final table $T'$ we get is corresponding to the subgraph which is the union of all the slices of $e_1,\ldots,e_m$. Note that the slice corresponding to $(u,v)$ is the union of this subgraph, the edge $(u,v)$, the vertices $u$, $v$ and possibly the edges $(u,u')$ and $(v,v')$ if any. Thus to extend $T'$ to a table $T$ for the level $i$ slice, we need to consider 4 more combinations depending on whether $u$ or $v$ is present in the solution. If any combination contains none of $u$ and $v$, the entries in $T$ remain same as the corresponding entries of $T'$. Consider any other valid index $(\bar{a},\bar{b},B')$, where $\bar{a}$ and $\bar{b}$ are length $i+1$ vectors, such that $u$ is chosen in $\bar{a}$, but $v$ is not chosen in $\bar{b}$. Then $T[\bar{a},\bar{b},B']=T'[\bar{a'},\bar{b'},B'-b_u]+w_u$, where $\bar{a'}$ and $\bar{b'}$ are the length $i$ vectors obtained by dropping the last scalar values from $\bar{a}$ and $\bar{b}$, and $w_u$ and $b_u$ are the weight and budget of $u$. Similarly, the entries can be computed for the combinations where $u$ is not chosen, but $v$ is chosen.

Let $l_m$ be the maximum among the levels of the vertices of $G$. Then by merging all the level $l_m$ slices, two at a step, we get a table $T_G$ for the whole graph. Recall that while merging two tables we compute an entry of the table for each index $\bar{a},\bar{c},B'$. In the last step of merging the vectors $\bar{a}$ and $\bar{c}$ must be same, as they are corresponding to the same boundary. Lastly, the solution of MWBIS in $G$ with budget $B$ can be found by taking the maximum among all the entries $T_G[\bar{a},\bar{c},B]$. 

Now we argue that any table stores the maximum solution corresponding to all possible combination of the boundary vertices. In base case the tables are created for the slices which are single edges and the correctness trivially follows. In inductive step the table for a level $i$ slice is computed by extending the tables for level $i-1$ slices considering all possible combinations of the boundary vertices. As the level $i-1$ tables contain correct entries by induction, the correctness for the entries of any level $i$ table also follows. 

The time complexity of the algorithm is dominated by the complexity of the table construction for the slices. For a single level $i$ slice the overall calls to ALLOC takes $O(8^kn_1B^3)$ time, where $n_1$ level $i-1$ slices are processed to form the new table. As the slices are edge disjoint in a single level except in the boundaries, each slice is used at most once for computation of the other slices. Thus in total the algorithm takes $O(8^knB^3)$ time, where $n$ is the number of vertices of $G$. As mentioned earlier, the same dynamic programming can be extended to handle any general $k$-outerplanar graph $G'=(V',E')$, which takes $O(8^k|V'|B^3)$ time.

\section{MWBIS in Interval Graphs}\label{sec:interval}
Given an interval graph $G=(V,E)$ we consider the MWBIS problem in $G$. We assume that the intervals corresponding to the vertices are also given. Let $I_1,\ldots,I_n$ be the intervals sorted in non-decreasing order of their finishing time. Suppose $b_j$ and $w_j$ be the respective budget and weight of $I_j$. An independent set of intervals is defined to be a set of pairwise non-intersecting intervals. We design a dynamic programming based algorithm to find a maximum weight independent set of these intervals having total budget at most $B$. Define $V_j$ to be the set of intervals $\{I_1,\ldots,I_j\}$
. 

\begin{algorithm*}[]
 \caption{}
\label{alg:interval}
\begin{algorithmic}[1]
   \REQUIRE A set of intervals $I=\{I_1,\ldots,I_n\}$ sorted in non-decreasing order of their starting time, a weight function $w$, a budget function $b$, a positive integer $B$, a $(n+1) \times (B+1)$ table $M$ 
   \ENSURE a maximum $B$-budgeted independent set of $I$ and its weight
   \FOR {$t=0$ to $B$}
     \STATE $w_{0,t} \leftarrow 0$
   \ENDFOR
   \FOR {$j=1$ to $n$}
     \STATE $w_{j,0} \leftarrow 0$
   \ENDFOR
   \FOR {$j=1$ to $n$}
     \FOR {$t=1$ to $B$}
       \STATE $w_{j,t} \leftarrow w_{j-1,t}$
       \STATE $f_{j,t} \leftarrow 0$
       \IF {$t \geq w_j$}  
         \STATE set $l_j$ to be the maximum index $m$ such that $I_{m} \in V_j$ does not intersect $I_j$ \COMMENT{set $l_j$ to 0 if no such maximum index exists}
         \IF {$w_{j,t} < w_{l_j,t-b_j}+w_j$}
           \STATE $w_{j,t} \leftarrow w_{l_j,t-b_j}+w_j$
           \STATE $f_{j,t} \leftarrow 1$
         \ENDIF
       \ENDIF
     \ENDFOR
   \ENDFOR
   \STATE $I' \leftarrow \phi$
   \STATE $j \leftarrow n$
   \STATE $t \leftarrow B$
   \WHILE {$t > 0$}
     \IF {$f_{j,t}==1$}
       \STATE $I' \leftarrow I' \cup {I_j}$
       \STATE $j \leftarrow l_j$
       \STATE $t \leftarrow t-b_j$
     \ELSE
       \STATE $j \leftarrow j-1$
     \ENDIF 
   \ENDWHILE  
   \RETURN $I', w_{n,B}$
\end{algorithmic}
\end{algorithm*}
Our algorithm processes the intervals in the order mentioned before. In iteration $j$ it computes maximum weight independent sets of $V_j$ having budget $k$ for $1\leq k \leq B$. Consider a maximum weight independent set of $V_j$ having budget $k$. Note that there could be two cases: (i) $I_j$ is contained in it, and (ii) $I_j$ is not contained in it. In Case (i) the solution is composed of $I_j$, and a maximum weight independent set of the intervals of $V_j$ which doesn't intersect $I_j$ and having budget at most $k-b_j$. Let $l_j$ be the maximum index such that $I_{l_j} \in V_j$ doesn't intersect $I_j$. As the intervals are sorted in nondecreasing order of their finishing time, no interval $I_t$ intersects $I_j$ for $t < l_j$. Thus if a maximum weight independent set of $V_{l_j}$ having budget $k-b_j$ is already computed, then we can use that solution to compute a solution for $V_j$. In Case (ii) the solution is a maximum weight independent set of $V_{j-1}$ having budget $k$. Thus in this case also we can readily compute a solution for $V_j$ if the solution for $V_{j-1}$ is already computed.

As the values computed in some iteration might be needed in later iterations all the computed values are stored in a 2-dimensional table $M$. Each cell $M(j,t)$ of $M$ contains two values $w_{j,t}$ and $f_{j,t}$, where $1\leq j\leq n$, $0\leq t\leq B$. $w_{j,t}$ stores the weight of maximum weight independent set of $V_j$ having budget $t$ and $f_{j,t}$ is the flag indicating whether $I_j$ is in that maximum weight independent set ($f_{j,t}=1$) or not ($f_{j,t}=0$). Now we formally describe the algorithm that computes a maximum weight independent set of $V_n$ of budget $B$. 

The entries of the row 0 and column 0 of $M$ are initialized to 0. The entries of the row 0 are used to solve the subproblems for which the input set of intervals is empty. The algorithm then considers the two cases as mentioned above and fills up all the entries of $M$. The maximum weight independent set $I'$ is retrieved by traversing in the backward direction starting with $I_n$. The flag value $f_{j,t}$ is consulted to decide whether $I_j$ is chosen in the solution. Lastly, the entry $w_{n,B}$ stores the weight of the compuetd set $I'$. Thus the algorithm returns its value with $I'$.

The time complexity of this algorithm is dominated by the computation of $l_j$ values and complexity of the nested for loop. Note that all the $l_j$ values can be precomputed outside of the nested for loop in quadratic time. Thus the nested for loop takes $O(nB)$ time and we have the following theorem.

\begin{theorem}
In any interval graph the MWBIS problem can be solved in $O(n(n+B))$ time.
\end{theorem}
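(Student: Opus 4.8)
The plan is to establish correctness of Algorithm~\ref{alg:interval} by induction on $j$ and then verify the running time; the algorithm is a budget-augmented version of the classical weighted interval-scheduling dynamic program. I would prove the invariant that, after the nested loop finishes, $w_{j,t}$ equals the weight of a maximum weight independent set of $V_j$ whose total budget is at most $t$, for all $0 \le j \le n$ and $0 \le t \le B$. The base cases are immediate: $w_{0,t}=0$ because $V_0$ is empty, and $w_{j,0}=0$ because every budget $b_i$ is a positive integer, so no interval fits in budget $0$.

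For the inductive step, fix $j \ge 1$ and $t \ge 1$ and assume the invariant for all smaller row indices and all budgets. Let $S^*$ be a maximum weight $t$-budgeted independent set of $V_j$. Either $I_j \notin S^*$, in which case $S^*$ is a $t$-budgeted independent set of $V_{j-1}$ and hence $w(S^*) \le w_{j-1,t}$, the value the algorithm assigns by default; or $I_j \in S^*$, in which case $S^* \setminus \{I_j\}$ is a $(t-b_j)$-budgeted independent set consisting of intervals of $V_{j-1}$ that are disjoint from $I_j$. The crux is to identify this collection with $V_{l_j}$: I would prove that an interval $I_m \in V_j$ is disjoint from $I_j$ if and only if $m \le l_j$. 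One direction is the definition of $l_j$ as the largest such index, which also forces every $I_m$ with $l_j < m < j$ to intersect $I_j$. For the other direction, since $I_{l_j}$ is disjoint from $I_j$ and finishes no later than $I_j$ in the finishing-time order, $I_{l_j}$ must end before $I_j$ begins; consequently any $I_m$ with $m < l_j$ finishes no later than $I_{l_j}$, hence before $I_j$ begins, and is therefore disjoint from $I_j$. Thus $S^* \setminus \{I_j\}$ is a $(t-b_j)$-budgeted independent set of $V_{l_j}$, so $w(S^*) \le w_{l_j,t-b_j} + w_j$, and conversely any optimal such set of $V_{l_j}$ extends to a feasible solution by adding $I_j$ whenever $t \ge b_j$. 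Taking the maximum of the two cases, which is precisely the update rule guarded by the budget-feasibility test, shows $w_{j,t}$ is correct; the flag $f_{j,t}$ records which case is attained, so the backward pass that consults $f_{j,t}$ (jumping from $j$ to $l_j$ and decrementing the budget by $b_j$ when $I_j$ is selected) reconstructs an optimal set $I'$ of weight $w_{n,B}$.

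The main obstacle is the structural equivalence between ``$I_m$ is disjoint from $I_j$'' and ``$m \le l_j$'': this is the one place where sorting by finishing time is indispensable, since without it the single pointer $l_j$ would not summarize all feasible predecessors, and the recurrence would fail. Everything else is a routine two-case induction.

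Finally, for the running time, all the $l_j$ values can be precomputed in $O(n^2)$ time outside the nested loop (indeed in $O(n\log n)$ by binary search, but quadratic suffices). The nested loop has $O(nB)$ cells, each filled in $O(1)$ time from $w_{j-1,t}$ and $w_{l_j,t-b_j}$, and the backward reconstruction touches each interval at most once, in $O(n)$ time. Hence the total is $O(n^2 + nB) = O(n(n+B))$, as claimed.
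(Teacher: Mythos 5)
Your proposal is correct and follows essentially the same route as the paper: the same two-case recurrence on whether $I_j$ is selected, the same pointer $l_j$ to the last non-intersecting predecessor, and the same accounting of $O(n^2)$ precomputation plus $O(nB)$ table filling. The only difference is that you spell out the structural lemma (that sorting by finishing time makes ``disjoint from $I_j$'' equivalent to ``index at most $l_j$'') in more detail than the paper, which merely asserts it in one sentence; your version is the more complete argument.
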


\section{Discussion} 
All the exact algorithms we have designed in this paper has a running time dependent on $B$ which could be exponential to $n$. In fact one can prove that MWBIS is $\mathcal{NP}$-hard even in star trees by reducing the problem from KNAPSACK \cite{GareyJ79}. Given an instance $I$ of KNAPSACK having $n$ objects, we construct a tree $K_{1,n}$. Each leaf in $K_{1,n}$ is corresponding to an object having same weight and budget (size) as the object. 
The budget of the center vertex is set to $B+1$, where $B$ is the size constraint in $I$. Then the budget is set to $B$ and thus the center vertex is never selected in any solution of MWBIS. Now it is easy to see that for the instance $I$, KNAPSACK has a solution of value at least $k$ if and only if there is a $B$-budgeted independent set of weight at least $k$ in the constructed tree. 
The problem that still remains unsolved is to determine the classes of graphs for which it is possible to design $O(n^c)$ time algorithms for some constant $c$.

Is it possible to improve the approximation bound in case of MBIS in $d+1$-claw free graphs? What can we say about the weighted version of this problem?
For planar graphs we have designed a PTAS for MWBIS using Baker's technique. The same technique can be extended to solve MWIS in bounded treewidth graphs \cite{Bodlaender93}. Hence an interesting problem in this direction is to extend our dynamic programming strategy for the bounded treewidth graphs.

In case of geometric graphs (intersection graphs of geometric objects) the hardness of approximation results for MIS do not hold. In fact there is a vast literature of MIS that propose PTASs and QPTASs for interesting classes of geometric graphs (\cite{AdamaszekW14},\cite{Chan03},\cite{ChanH12},\cite{ErlebachJS05},\cite{HochbaumM85}). Thus another possible research direction is to study the MWBIS problem in geometric graphs. 
\\\\\textbf{Acknowledgements}\\
I am deeply thankful to Kasturi Varadarajan for pointing out the fact that Baker's technique can be extended in case of MWBIS in planar graphs.

\bibliographystyle{plain}
\bibliography{employee}
\end{document}